\DeclareDocumentCommand\R{}{\mathbb{R}}
\DeclareDocumentCommand\onevec{o}{\IfNoValueTF{#1}{\mathbbm{1}}{\mathbbm{1}_{#1}}}
\DeclareDocumentCommand\Rnonneg{}{\R_{\geq\,0}}
\DeclareDocumentCommand\core{o}{C_{\IfValueTF{#1}{#1}{(N,c)}}}
\DeclareDocumentCommand\almostCore{o}{AC_{\IfValueTF{#1}{#1}{(N,c)}}}
\DeclareDocumentCommand\Pcore{o}{P_{\IfValueTF{#1}{#1}{(N,c)}}}
\DeclareDocumentCommand\costOfStability{o}{\text{CoS}{\IfValueTF{#1}{#1}{(N,c)}}}
\DeclareDocumentCommand\cplxP{}{\mathsf{P}}
\DeclareDocumentCommand\cplxNP{}{\mathsf{NP}}
\newtheorem{theorem}{Theorem}
\newtheorem{lemma}{Lemma}
\newtheorem{corollary}{Corollary}
\newtheorem{proposition}{Proposition}
\title{Algorithmic Solutions \\for Maximizing Shareable Costs} 
\author{
 Rong Zou \\
  Northwestern Polytechnical University\\
  Xi'an, Shaanxi, China\\
  \texttt{rongzou20@gmail.com} \\
  %% examples of more authors
   \And
  Boyue Lin \\
  Northwestern Polytechnical University\\
  Xi'an, Shaanxi, China\\
  \texttt{linboyuexigongda@163.com} \\
   \AND
   Marc Uetz \\
   University of Twente\\
  Enschede, The Netherlands \\
\texttt{m.uetz@utwente.nl} \\
 \And
 Matthias Walter \\
  University of Twente\\
  Enschede, The Netherlands \\
\texttt{m.walter@utwente.nl} 
}
\begin{document}

\maketitle

\begin{abstract}
This paper addresses the optimization problem to maximize the total costs that can be shared among a group of agents, while maintaining stability in the sense of the core constraints of a cooperative transferable utility game, or TU game. When maximizing total shareable costs, the cost shares must satisfy all constraints that define the core of a TU game, except for being budget balanced. The paper first gives a fairly complete picture of the computational complexity of this optimization problem, its relation to optimiztion over the core itself, and its equivalence to other, minimal core relaxations that have been proposed earlier.
We then address minimum cost spanning tree (MST) games as an example for a class of cost sharing games with non-empty core. While submodular cost functions yield efficient algorithms to maximize shareable costs, MST games have cost functions that are subadditive, but generally not submodular. Nevertheless, it is well known that cost shares in the core of MST games can be found efficiently. In contrast, we show that the maximization of shareable costs is $\cplxNP$-hard for MST games and derive a 2-approximation algorithm. Our work opens several directions for future research.
\end{abstract}

\keywords{Cost Sharing, Core, Minimum Cost Spanning Tree Game, Approximation}

%\pagestyle{fancy}
%\fancyhead{}

%%% The next command prints the information defined in the preamble.

\maketitle 

%%%%%%%%%%%%%%%%%%%%%%%%%%%%%%%%%%%%%%%%%%%%%%%%%%%%%%%%%%%%%%%%%%%%%%%%

\section{Introduction}
\label{sec_intro}
The fundamental algorithmic question that is addressed in this paper is: Can we maximize the total costs that can be shared among a set of agents, while maintaining coalitional stability? Here, coalitional stability refers to the core constraints  of an underlying cooperative transferable utility game, or TU game:
Any proper subset of the set of all agents has an outside option at a certain cost, and coalitional stability of cost shares means that all subsets of agents are willing to accept the cost shares, because their outside option is less attractive, meaning that it is at least as costly as the sum of their cost shares. 
This question is arguably a  fundamental question for the design of cost sharing mechanisms, and the main goal of this paper is to give more insight into its algorithmic complexity.
%as it turns out, the algorithmic question to maximize the total shareable costs has not been addressed systematically. 
Several closely related results exist, but these are a bit scattered and sometimes ignorant of each other. This will be discussed in Sections~\ref{sec:intro_ac} and~\ref{sec_related_work}.

The main contributions of this paper are as follows. 
We introduce a basic polyhedral object that we refer to as the \emph{almost core} of a cooperative game.
It is obtained from the core by relaxing the requirement that the cost shares must be budget balanced. By definition, this polyhedron is non-empty.
The algorithmic problem that we address is to maximize cost shares that lie in the almost core, which is a linear optimization problem over that polyhedron.
For the case that the  core of the corresponding cooperative game is empty, it turns out that the computational problem to maximize shareable costs is equivalent to finding a minimal non-empty core relaxation in the sense of several of the core relaxations that have been proposed earlier in the literature.
This is maybe not surprising, yet a good overview of how all these relaxations relate to each other does not seem to appear anywhere in the literature.
The paper further establishes complexity theoretic results that relate computational problems for the almost core with corresponding problems for the classical core. 
While it turns out that general linear optimization over almost core and core  share the same algorithmic complexity, we show that there are classes of games where core elements can be efficiently computed, while the computation of maximal shareable costs 
cannot be done in polynomial time, unless $\cplxP=\cplxNP$.
That hardness result is obtained for a well-studied class of games with non-empty core, namely minimum cost spanning tree games.
This class of games is interesting also because the resulting cost function is subadditive but generally not submodular.
And while submodularity yields polynomial-time algorithms, our hardness result shows that subadditivity does not suffice.
For minimum cost spanning tree games, we further show how to obtain a 2-approximation algorithm for maximizing shareable costs, and we show that our analysis of that algorithm is tight.

The structure of this paper is as follows. The basic notions and definitions are given in Section~\ref{sec:intro_ac}. As previous papers have mostly focused on core relaxations for unbalanced games (that is, with an empty core), we briefly review these in Section~\ref{sec_related_work}, and discuss how they relate to the problem to compute maximal ``almost core'' cost shares. 
%problem~\eqref{eq:main_AC_problem}. 
%In comparison to most of these papers, 
A novel aspect of our approach is to also address games that have a non-empty core.
Section~\ref{sec_complexity}  therefore relates linear optimization over the almost core
to the core, and we derive some algorithmic consequences.
Section~\ref{sec_MST_games} then addresses the problem to compute maximal cost shares for minimum cost spanning tree (MST) games, showing $\cplxNP$-hardness, as well as giving a 2-approximation algorithm.
We conclude with some open problems in Section~\ref{sec_conclusions}.

\section{Core and Almost Core for TU Games}
\label{sec:intro_ac}
A cooperative game with transferable utility (henceforth TU game) is described by a pair $(N,c)$ where $N=\{1,\dots,n\}$ denotes the set of agents, and $c: 2^{N}\to {\R_{\ge 0}}$ is the characteristic function that assigns to every coalition $S$ a value $c(S)$ representing the cost of an ``outside option'', which is the minimum total cost that the agents in $S$ can achieve if they cooperate amongst themselves.
%Denote by $\mathcal{G}^N$ the set of all TU games over $N$,
With a slight overload of notation write $n=|N|$ for the number of agents.
%\Matthias{We use $\mathcal{G}$ exactly once more. Remove?}
%\Matthias{Can we assume $c : 2^N \to \Rnonneg$? Otherwise the multiplicative core concepts do not make much sense.}
An \emph{allocation} for $(N,c)$ is a vector $x\in \R^n$ with $x_{i}$ being the cost share allocated to agent $i\in N$. For convenience, we write $x(S)=\sum_{i\in S}x_i$.
An allocation $x$ is said to be \emph{budget balanced} if $x(N)=c(N)$.
That means that the total cost of the so-called \emph{grand coalition} $N$ is being distributed over the individual agents.
It is called stable if it satisfies \emph{coalitional stability}, i.e., $x(S) \leq c(S)$ for all $S \subsetneqq N$.
The \emph{core}~\cite{Gillies1959} of game $(N,c)$, arguably one of the most important concepts in cooperative game theory, consists of all budget balanced allocations satisfying coalitional stability.
The core of a TU game is given by
\begin{equation*}
  \core \coloneqq \{ x \in \R^n : x(S) \leq c(S) ~\forall S \subsetneqq N,~ x(N) =  c(N) \}\,.
\end{equation*}
The core of a TU game is non-empty if and only if the game is balanced~\cite{Bondareva1963,Shapley1969}.
In fact, being balanced is just a dual characterization of the non-emptiness of the polyhedron $\core$. 

When we drop the equality constraint that a core allocation is budget balanced, so do not require that $x(N) = c(N)$, it allows to vary the total cost that is distributed over the set of agents, resulting in a problem that always has a feasible solution.
This captures the idea that, depending on the underlying game, one may have to, or want to, distribute either less or more than $c(N)$.
We refer to the set of all such allocations as the \emph{almost core}, because the cost shares  fulfill almost all, i.e., all except one of the core constraints.
Formally, given a TU game $(N,c)$, define the almost core for $(N,c)$ by  
\begin{equation*}
  \almostCore \coloneqq \{ x \in \R^n : x(S) \leq c(S) ~\forall S \subsetneqq N\}\,.
\end{equation*}
Obviously, $\core \subseteq \almostCore$. The major motivation for this definition is to systematically study the algorithmic complexity of cooperative games without having to obey to budget balance, so optimization over the polyhedron $\almostCore$. Let us motivate the relevance of this problem.

On the one hand, if the total cost $c(N)$ of the grand coalition \emph{cannot} be distributed over the set of agents while maintaining coalitional stability, i.e., the game is unbalanced, it is a natural question to ask what fraction of the total cost $c(N)$ can be maximally distributed while maintaining coalitional stability.
This problem has been addressed under different names, among them the \emph{cost of stability} %\Rong{\emph{cost of stability}} 
of a cooperative game~\cite{Bachrach2009}.
It has received quite some attention in the literature, e.g.~\cite{Aziz2010,Bachrach2018,Bachrach2009,Bejan2009,Bousquet2015,Gomez2011,Jain2007,Liu2009,Liu2018,Liu2022,Meir2011,Meir2013}.
Indeed, for games with empty core, maximizing $x(N)$ over the almost core is equivalent to computing the cost of stability, and also to computing some other minimal core relaxations proposed earlier in the literature; see Section~\ref{sec_related_work} for more details.

On the other hand, also if the core is non-empty one may be interested in maximizing the total cost that can be distributed over the set of agents. It reveals the maximal value for $c(N)$ that would still yield a non-empty core. One motivation for this maximization problem is to determine the maximal tax rate that could be levied on a given $c(N)$, without any subset of agents $S\subsetneqq N$ wanting to deviate. 

That said, the object of interest of this paper is the following linear program.
\begin{equation}
\label{eq:main_AC_problem}
    \max \{ x(N) : x \in \almostCore \}.
\end{equation}
The objective value of this linear program 
indicates the largest total cost that can be shared among the agents while retaining stability in the sense that no subset of agents $S \subsetneqq N$ would prefer to deviate to the outside option. We call an optimal solution value for this linear program the \emph{almost core optimum}, and any maximizer is called an \emph{optimal almost core allocation}. Sometimes we also consider the restricted problem where we also require that $x\ge 0$, which means that agents must not receive subsidies.

Clearly, the core of a game is non-empty if and only if the almost core optimum is larger than or equal to $c(N)$.
We study problem \eqref{eq:main_AC_problem} mainly for games with non-empty core, while for games with empty core we next give a fairly complete overview of its relation to earlier proposed core relaxations.

\section{Equivalent and Related Relaxations of the Core}
\label{sec_related_work}

\DeclareDocumentCommand\strongEpsilon{}{\ensuremath{\varepsilon_{\text{s}}^\star}}
\DeclareDocumentCommand\weakEpsilon{}{\ensuremath{\varepsilon_{\text{w}}^\star}}
\DeclareDocumentCommand\multEpsilon{}{\ensuremath{\varepsilon_{\text{m}}^\star}}
\DeclareDocumentCommand\approxGamma{}{\ensuremath{\gamma_{\text{a}}^\star}}
\DeclareDocumentCommand\costOfStability{}{\ensuremath{\delta^\star_{\text{CoS}}}}
\DeclareDocumentCommand\deltaExtCore{}{\ensuremath{\delta^\star_{\text{ec}}}}

We review several well-known and related concepts that were introduced in order to deal with games having an empty core and discuss their relationship to the almost core (optimum).

%\begin{itemize}
%\item 

  The first relaxation of the core, introduced by Shapley and Shubik~\cite{Shapley1966}, is the \emph{strong $\varepsilon$-core}, defined as
  \begin{equation*}
    C^{\varepsilon}_{\text{s}}(N,c) \coloneqq \{ x \in \R^n : x(S) \leq c(S) + \varepsilon ~\forall S \subsetneqq N,~ x(N) = c(N) \}\,.
  \end{equation*}
  We denote the smallest $\varepsilon \geq 0$ for which this set is non-empty by $\strongEpsilon$.
  The corresponding set $C^{\strongEpsilon}_{\text{s}}(N,c)$ is called the \emph{least core}~\cite{Maschler1979}.

  Shapyley and Shubik~\cite{Shapley1966} also introduced the \emph{weak $\varepsilon$-core} as 
  \begin{equation*}
    C^{\varepsilon}_{\text{w}}(N,c) \coloneqq \{ x \in \R^n : x(S) \leq c(S) + \varepsilon \cdot |S| ~\forall S \subsetneqq N,~ x(N) = c(N) \}\ .
  \end{equation*}
  We denote the smallest $\varepsilon \geq 0$ for which this set is non-empty by $\weakEpsilon$. Note that by definition, for any $\varepsilon \geq 0$, $C^{\varepsilon}_{\text{s}}(N,c)\subseteq C^{\varepsilon}_{\text{w}}(N,c)$, and hence $\weakEpsilon\le \strongEpsilon$.

  Instead of using an additive relaxation of the constraints, Faigle and Kern~\cite{Faigle1993} defined the multiplicative $\varepsilon$-core as
  \begin{equation*}
    C^{\varepsilon}_{\text{m}}(N,c) \coloneqq \{ x \in \R^n : x(S) \leq (1 + \varepsilon) \cdot c(S) ~\forall S \subsetneqq N,~ x(N) = c(N) \}\,.
  \end{equation*}
  Denote the smallest $\varepsilon \geq 0$ for which this set is non-empty by~$\multEpsilon$.

  A different viewpoint is called \emph{approximate core} or \emph{$\gamma$-core}~\cite{Jain2007} for some $\gamma \in [0,1]$, it is defined as
  \begin{equation*}
    C^{\gamma}_{\text{a}}(N,c) \coloneqq \{ x \in \R^n : x(S) \leq c(S) ~\forall S \subseteq N,~ \gamma \cdot c(N) \leq x(N) \}\,.
  \end{equation*}
  Denote the largest $\gamma \leq 1$ for which this set is non-empty by $\approxGamma$.

  The gap between the almost core optimum and the total cost of the grand coalition $c(N)$ was 
  called the \emph{cost of stability} for an unbalanced cooperative game
  by Bachrach et al.~\cite{Bachrach2009}. 
  For (unbalanced) cost sharing games it is defined by Meir et al.~\cite{Meir2010} as
  \begin{equation*}
    \costOfStability \coloneqq c(N) - \max \{ x(N) : x(S) \leq c(S) ~\forall S \subseteq N \}\,.
  \end{equation*}
%\Rong{$S \subsetneqq N$ in the above definition?}
%\Marc{When the core is empty it does not matter as $x(N)\le c(N)$ is redundant.}
  An alternative viewpoint was independently introduced in a paper by Bejan and G{\'o}mez~\cite{Bejan2009} who considered, for profit sharing games, the so-called \emph{extended core}.
  In order to define it for cost sharing games, let
  \begin{equation}
     \deltaExtCore \coloneqq \min\{ t(N) : \exists (x,t) \in \R^n \times \Rnonneg^n,~ x(N) = c(N), 
     (x-t)(S) \leq c(S) ~\forall S \subsetneqq N \}\,. \label{eq_delta_ext_core}
  \end{equation}
  The \emph{extended core} is now the set of all budget balanced $x \in \R^n$, so all $x$ with $x(N) = c(N)$ for which the minimum above is attained (for suitable $t \in \Rnonneg^n$).
%\end{itemize}

%It is well known that the optimization problems for the multiplicative $\varepsilon$-core, the approximate core, the cost of stability and the extended core are all equivalent.
%\Matthias{Need to cite existing work here.}

Yet another concept to stabilize an unbalanced game was considered by Zick, Polukarov, and Jennings~\cite{Zick2013}. Interpreting 
$t_i$ in the definition of the extended core of Bejan and G{\'o}mez~\cite{Bejan2009} 
as a discount offered to agent $i$, 
in~\cite{Zick2013} a coalitional discount $t_S$ is offered to each agent set $S$.
This is an exponential blowup of the solution space, which however gives more flexibility.

For unbalanced games, i.e., games with an empty core, computing the almost core optimum is clearly the same as computing the cost of stability $\costOfStability$.
The following theorem further shows how the different core relaxations are related with each other with respect to optimization.
Some of these relations were known, e.g., Liu et al.~\cite{Liu2016,Liu2018} mention the equivalence of computing \approxGamma\ and \costOfStability\ and \multEpsilon, and also the relation between the cost of stability \costOfStability\  and the smallest weak $\varepsilon$-core \weakEpsilon\ appears in \cite{Bachrach2009,Meir2011}, 
yet we are not aware of a summarizing overview of how the different relaxations relate.
Hence we give this summary here for the sake of completeness, and also give the short proof.
\begin{theorem}[in parts folklore]
  \label{thm_equivalence}
  For any TU game $(N,c)$ with empty core, the optimization problems for the weak $\varepsilon$-core, the multiplicative $\varepsilon$-core, the cost of stability and the extended core are equivalent.
  In particular, the values satisfy
  \begin{align*}
       \deltaExtCore
    = (1 - \approxGamma) \cdot c(N)
    = \frac{ \multEpsilon }{ 1 + \multEpsilon } \cdot c(N)
    = \costOfStability
    = \weakEpsilon \cdot n\,.
%    \ge \frac{n}{n-1}\strongEpsilon\,.
  \end{align*}
\end{theorem}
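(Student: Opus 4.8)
The plan is to pivot everything through the \emph{almost core optimum} $\alpha \coloneqq \max\{x(N) : x \in \almostCore\}$, and to show that each of the four listed quantities equals $c(N)-\alpha$ (after the obvious rearrangement in the cases of $\approxGamma$ and $\multEpsilon$) via an elementary, value-preserving correspondence between the relaxation at hand and the slices $\{y \in \almostCore : y(N)=v\}$. One preliminary observation does most of the work. Since $(N,c)$ is unbalanced we have $\alpha < c(N)$: otherwise an allocation $x \in \almostCore$ with $x(N)\ge c(N)$ could have coordinates lowered until $x(N)=c(N)$, producing a core element. Also $\alpha \ge 0$ because $\mathbf 0 \in \almostCore$ (as $c\ge 0$), so in particular $c(N)>0$. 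Moreover $\{x(N) : x\in\almostCore\}$ is exactly the ray $(-\infty,\alpha]$: it is bounded above by $\alpha$, which is attained, and for any $v\le\alpha$, decreasing a single coordinate of an optimal almost core allocation by $\alpha-v$ yields an allocation in $\almostCore$ with total cost exactly $v$, since lowering a coordinate cannot violate a constraint $x(S)\le c(S)$. Consequently the constraint $x(N)\le c(N)$ is redundant in the definitions of both $\approxGamma$ and $\costOfStability$, so that $\costOfStability = c(N)-\alpha$ immediately.

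Next I would handle the three additive/multiplicative relaxations through the affine map matching each. For the weak $\varepsilon$-core, the translation $x\mapsto x-\varepsilon\onevec$ turns every constraint $x(S)\le c(S)+\varepsilon|S|$ into $y(S)\le c(S)$ and $x(N)=c(N)$ into $y(N)=c(N)-\varepsilon n$; hence $C^{\varepsilon}_{\text{w}}(N,c)\neq\emptyset$ iff $c(N)-\varepsilon n\le\alpha$, giving $\weakEpsilon = (c(N)-\alpha)/n$, i.e.\ $\weakEpsilon\cdot n = \costOfStability$. For the multiplicative $\varepsilon$-core, the scaling $x\mapsto \tfrac{1}{1+\varepsilon}x$ plays the same role, so $C^{\varepsilon}_{\text{m}}(N,c)\neq\emptyset$ iff $c(N)/(1+\varepsilon)\le\alpha$, whence $\multEpsilon = c(N)/\alpha - 1$ and $\tfrac{\multEpsilon}{1+\multEpsilon}\,c(N) = (1-\alpha/c(N))\,c(N) = c(N)-\alpha$. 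For the $\gamma$-core, after deleting the redundant constraint $x(N)\le c(N)$, non-emptiness of $C^{\gamma}_{\text{a}}(N,c)$ is simply $\gamma\,c(N)\le\alpha$, so $\approxGamma = \alpha/c(N)$ and $(1-\approxGamma)\,c(N) = c(N)-\alpha$.

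For the extended core I would argue directly from \eqref{eq_delta_ext_core}. If $(x,t)$ is feasible there, then $y\coloneqq x-t$ lies in $\almostCore$ with $y(N)=c(N)-t(N)$, forcing $t(N)\ge c(N)-\alpha$; conversely, from an optimal almost core allocation $y$ with $y(N)=\alpha$ and any $t\in\Rnonneg^n$ with $t(N)=c(N)-\alpha\ (\ge 0)$, the pair $(y+t,\,t)$ is feasible with objective $c(N)-\alpha$. Hence $\deltaExtCore = c(N)-\alpha = \costOfStability$. All the correspondences used (adding or subtracting a multiple of $\onevec$, positive scaling, introducing the slack vector $t$) are linear and polynomial-time computable, which upgrades the chain of equalities to an equivalence of the corresponding optimization problems, not merely of their optimal values.

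I do not anticipate a genuine obstacle: the whole argument is bookkeeping around the single observation that the range of $x(N)$ over $\almostCore$ is the ray $(-\infty,\alpha]$. The one point needing care is the degenerate case $\alpha=0$, which can occur for unbalanced games (e.g.\ the two-agent game with $c(\{1\})=c(\{2\})=0<c(\{1,2\})$): there no allocation with $x(N)=c(N)$ satisfies $x(S)\le(1+\varepsilon)c(S)$ for any finite $\varepsilon$, so $\multEpsilon=+\infty$ and the identity $\tfrac{\multEpsilon}{1+\multEpsilon}c(N)=c(N)-\alpha$ is to be read with the convention $\tfrac{\infty}{1+\infty}=1$; the other four quantities are unaffected.
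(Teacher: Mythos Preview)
Your proposal is correct and follows essentially the same approach as the paper: both proofs rely on the same elementary affine correspondences (translation by $\varepsilon\onevec$ for the weak $\varepsilon$-core, scaling by $1+\varepsilon$ for the multiplicative $\varepsilon$-core, and the substitution $y=x-t$ for the extended core). The only differences are organizational---you route everything through the hub $\alpha=\max\{x(N):x\in\almostCore\}$ whereas the paper chains the equalities pairwise---and that you explicitly flag the degenerate case $\alpha=0$ where $\multEpsilon=+\infty$, which the paper passes over silently.
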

\begin{proof}
  First, we establish $\costOfStability = \deltaExtCore$.
  We substitute $x - t$ by $x'$ in~\eqref{eq_delta_ext_core} and obtain
  \begin{multline*}
    \deltaExtCore = \min\{ t(N) : \exists (x',t) \in \R^n \times \Rnonneg^n,~ x'(N) + t(N) = c(N),
    ~x'(S) \leq c(S) ~\forall S \subsetneqq N \}\,.
  \end{multline*}
  Now it is easy to see that the actual entries of $t$ do not matter (except for nonnegativity), but only the value $t(N)$ is important.
  This yields $\costOfStability = \deltaExtCore$.

  Second, we show $\costOfStability = (1 - \approxGamma) \cdot c(N)$.
  To this end, observe
  \begin{equation*}
    \approxGamma = \max \{ \gamma \in \R : \exists x \in \R^n,~ x(S) \leq c(S) ~\forall S \subseteq N,~ x(N) = \gamma c(N) \}\,.
  \end{equation*}
  Clearly, the maximum is attained by $x^\star \in \R^n$ with $x^\star(N)$ maximum.
  Moreover, the value of $\approxGamma$ is then equal to $x^\star(N) / c(N)$.
  This shows $\costOfStability / c(N) = 1 - \approxGamma$.

  Third, we show $1 - \approxGamma = \multEpsilon / (1+\multEpsilon)$.
  Observe that the map $\pi : \R^n \to \R^n$ defined by $\pi(x) = (1+\varepsilon) x$ induces a bijection between allocations $x \in \R^n$ with $x(S) \leq c(S)$ for all $S \subseteq N$ and allocations $\pi(x)$ with $\pi(x)(S) \leq (1+\varepsilon) c(S)$ for all $S \subseteq N$.
  Moreover, $\pi(x)(N)=(1+\varepsilon) x(N)$.
  Hence, $C^{\varepsilon}_{\text{m}}(N,c)$ is (non-)empty if and only if $C^{\gamma}_{\text{a}}(N,c)$ is (non-)empty, where $\gamma = 1 / (1+\varepsilon)$ holds.
  This implies $\approxGamma = 1 / (1 + \multEpsilon)$.
  
  We finally show $\costOfStability = \weakEpsilon \cdot n$.
  To this end, in
  \[
     \weakEpsilon = \min \{ \varepsilon \geq 0: \exists x, ~x(S) \leq c(S) + \varepsilon \cdot |S| ~\forall S \subsetneqq N,~ x(N) = c(N) \}
  \]
  we substitute $x$ by $x' + (\varepsilon, \varepsilon, \dotsc, \varepsilon)$ which yields
  \[
    \weakEpsilon = \min \{ \varepsilon \geq 0: \exists x', ~x'(S) \leq c(S) ~\forall S \subsetneqq N,~ x'(N) + \varepsilon \cdot n = c(N) \}\,.
  \]
  Clearly, the minimum $\weakEpsilon$ is attained if and only if $\varepsilon \cdot n = \costOfStability$ holds.
\end{proof}

Moreover, it was shown in~\cite[Section 4]{Meir2011} that $\weakEpsilon \ge \frac{1}{n-1}\strongEpsilon$.
Further relations between the cost of stability $\costOfStability$ and other core relaxations for specific classes of games appear in~\cite{Meir2011,Bachrach2018}.
For instance, it is true that for supperadditive (profit sharing) games, $\costOfStability\le \sqrt{n}\strongEpsilon$ and $\sqrt{n}\weakEpsilon\le \strongEpsilon$.
Indeed, much of the previous work in this direction was about determining bounds on the cost of stability~\cite{Bachrach2009,Meir2010,Meir2011,Meir2013,Bousquet2015} or other structural insights~\cite{Bejan2009}.

However,  algorithmic considerations were also made for specific (unbalanced) games such as showing hardness for computing the price of stability of general weighted voting games~\cite{Bachrach2009}, and showing hardness for computing the price of stability of threshold network flow games~\cite{Resnick2009}. 
Moreover, Aziz, Brandt and Harrenstein~\cite{Aziz2010} give several results on the computational complexity of computing the cost of stability (and other measures) for several combinatorial games such as weighted graph, voting or matching games, as well as their threshold versions. 

One of the few papers which considers the impact of restrictions on possible coalition formation in relation to algorithmic questions, and in that respect also related to our work, is by Chalkiadakis, Greco and Markakis~\cite{Chalkiadakis2016}.
In the spirit of Myerson~\cite{Myerson1977}, they assume that the formation of coalitions is restricted by a so-called interaction graph, and analyze how the computational complexity of several core-related concepts such as core membership, core non-emptiness, or cost of stability depends on the structure of that graph.
Under different assumptions on polynomial-time compact representations of the underlying game, their results include hardness as well as tractability results that depend on the interaction graph.
Their results also imply hardness of computing the cost of stability for \emph{arbitrary} subadditive (cost) games.

Also approximations of $\multEpsilon$ for the multiplicative $(1+\varepsilon)$-core and corresponding allocations have been obtained, e.g., for the symmetric traveling salesperson game by Faigle et al.~\cite{Faigle1998}, and for the asymmetric case also by Bl\"aser et al.~\cite{Blaser2008}.

There are papers that attack the problem from a mathematical optimization and computational point of view.
%., however without having immediate implications for the results in this paper. 
%We discuss \red{some of them} briefly.
Under the name ``optimal cost share problem'' (OCSP), Caprara and Letchford~\cite{Caprara2010} suggest how to obtain $\gamma$-core solutions for a generalization of certain combinatorial games, named integer minimization games, using column or row generation techniques.
Under the name ``optimal cost allocation problem'' (OCAP), also Liu, Qi and Xu~\cite{Liu2016} follow the line of research initiated by~\cite{Caprara2010} and give computational results using Lagrangian relaxations.
A related line of research~\cite{Liu2018} is to consider the strong $\varepsilon$-core relaxation parameterized by $\varepsilon$ as given by the function
\begin{equation*}
  \omega(\varepsilon) \coloneqq \min_{x \in \R^n}\{ c(N)-x(N) : x(S) \leq c(S) + \varepsilon ~\forall S \subsetneqq N \}\,,
  \end{equation*}
and to approximate it computationally.
This so-called ``penalty-subsidy function''~\cite{Liu2018} is further studied in another variant in a follow up paper~\cite{Liu2022}, there approximating it using Langragian relaxation techniques, and with computational results specifically for traveling salesperson games.

Also the problem to compute allocations in the least core has been considered in the literature.
For cooperative games with submodular cost functions, it can be computed in polynomial time~\cite{Deng1998}, while for supermodular cost cooperative games it is $\cplxNP$-hard to compute, and even hard to approximate~\cite{SchulzU2010}. 

Specifically relevant for our work are results by Faigle et al.~\cite{Faigle2000} who show, among other things, $\cplxNP$-hardness to compute a cost allocation in the so-called $f$-least core for minimum cost spanning tree games, which is a tightening of the core constraints to $x(S) \leq c(S) -\varepsilon f(S)$ for certain non-negative functions $f$.
As we will argue in Section~\ref{sec_MST_games}, their result also implies hardness of computing optimal almost core allocations for the class of minimum cost spanning tree games. 

Finally, we briefly discuss the relation of almost core allocations to cost sharing methods in mechanism design. First, note that cost sharing methods  with an additional property called population monotonicity have been considered~\cite{Sprumont}, specifically also for spanning tree games~\cite{KentSkorin-Kapov1996,TijsEtAL2004}.  
Population monotonicity, also called cross monotonicity, means that one computes cost shares $x^Q$ in the cores of all subgames induced by subsets of agents $Q\subseteq N$, and the cost shares per agent $x_i^Q$ are monotonically non-increasing in $Q$.
Cross monotonicity is desirable because it yields group-strategyproof mechanisms in a setting where the individual costs of agents are private information~\cite{Moulin}.
In that context, K\"onemann et al.\ \cite{KoeneEtAl2008} show how to obtain a cross-monotonic cost sharing method for the Steiner forest problem with the property of being 2-budget balanced. That means that one computes, for all $Q\subseteq N$, a Steiner forest for agents $Q$, with cost $\tilde{c}(Q)$,
along with cost shares $x^Q$ so that $\frac12 \tilde{c}(Q) \le x^Q(Q)\le c(Q)$. 
As a byproduct, this yields an allocation $x^N$ in the corresponding almost core of Steiner forest games, with $x^N(S)\le c(S)$, for all $S\subseteq N$ and the additional property that $x^N(N)\ge\frac12 \tilde{c}(N)$.
However, the distinguishing feature in this paper, namely that the constraint $x(N)\le c(N)$ is absent and allocations with $x(N)\not\le c(N)$ are allowed, is not present in those works.

\section{Computational Complexity Considerations}
\label{sec_complexity}

In this section we investigate the computational complexity of optimization problems related to the (nonnegative) core and almost core.
To capture results for the general and the nonnegative case, we consider linear optimization over the polyhedra
\begin{align*}
  \almostCore \quad\text{ and }\quad
  \Pcore{} &\coloneqq \{ x \in \R^n : x(S) \leq c(S) ~\forall S \subseteq N \}.
\end{align*}
as well as optimization over $\Pcore \cap \Rnonneg^n$ and $\almostCore \cap \Rnonneg^n$ for families of games $(N,c)$.
Note that if the core is non-empty then it is the set of optimal solutions when maximizing $\onevec\cdot x$ over $\Pcore$.
Also note that whenever the core of a game $(N,c)$ is empty, this means that the constraint $x(N)\le c(N)$ is implied by the set of constraints $x(S)\le c(S)$, $S\subsetneqq N$,
which in turn implies $\Pcore=\almostCore$.
For games with non-empty core, we get the following correspondence between the optimization problems for the two polyhedra. The proof goes by showing the corresponding separation problems are polynomially equivalent.
%, and can be found in Appendix~\ref{app:B}. 
\begin{theorem}
  \label{thm_opt_general}
  For a family of games $(N,c)$,
  %{with non-empty core}, 
  linear optimization problems over $\almostCore$ can be solved in polynomial time if and only if linear optimization problems over $\Pcore$ can be solved in polynomial time.
\end{theorem}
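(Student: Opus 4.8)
The plan is to first reduce the theorem to a statement about separation problems via the polynomial-time equivalence of (strong) linear optimization and (strong) separation for well-described polyhedra, due to Gr\"otschel, Lov\'asz, and Schrijver, and then to show that separation over $\almostCore$ and separation over $\Pcore$ are polynomial-time equivalent. Both polyhedra are well-described: every facet is one of the inequalities $x(S)\le c(S)$, whose encoding length is polynomially bounded (in $n$ and in the encoding length of the values $c(S)$), and both are non-empty, in fact full-dimensional, since they contain $-M\onevec$ together with a small ball for $M$ large enough. Hence ``linear optimization over $P$ is in polynomial time'' is equivalent to ``the separation problem for $P$ is in polynomial time'' for $P\in\{\almostCore,\Pcore\}$, so it suffices to relate the two separation problems, where separation for $P$ means: given $\bar x\in\mathbb{Q}^n$, either assert $\bar x\in P$ or return a coalition $S$ with $\bar x(S)>c(S)$. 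Throughout we may assume that $c(N)$ can be computed in polynomial time, as this value is needed even to describe $\Pcore$.

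One direction is immediate: given a separation oracle for $\almostCore$ and a point $\bar x$, run the oracle; if it returns a violated $S\subsetneqq N$, output it; otherwise compare $\bar x(N)$ with $c(N)$, outputting $S=N$ if this inequality is violated and certifying $\bar x\in\Pcore$ if it is not. The converse direction is the crux, and I expect it to be the main obstacle: the naive attempt, which translates $\bar x$ by a multiple of $\onevec$ so as to satisfy the grand-coalition inequality and then queries the $\Pcore$-oracle, fails because subtracting $\mu\onevec$ can simultaneously turn a violated \emph{proper}-coalition inequality into a satisfied one, so the $\Pcore$-oracle would report feasibility even when $\bar x\notin\almostCore$.

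To repair this I would use $n+1$ calls to the $\Pcore$-oracle. First query it on $\bar x$: if it returns a violated $S\subsetneqq N$ we are done, and if it certifies $\bar x\in\Pcore$ then $\bar x\in\Pcore\subseteq\almostCore$. The only remaining case is that it returns $S=N$, i.e.\ $\bar x(N)>c(N)$; put $\delta\coloneqq\bar x(N)-c(N)>0$ and, for each $i\in N$, query the oracle on $y^{(i)}\coloneqq\bar x-\delta\,e_i$, which satisfies $y^{(i)}(N)=c(N)$. If some call returns a violated coalition $T$, then $T\neq N$ (the grand-coalition inequality holds with equality at $y^{(i)}$), and from $y^{(i)}(T)>c(T)$ one gets $\bar x(T)>c(T)$ whether $i\notin T$ (then $\bar x(T)=y^{(i)}(T)$) or $i\in T$ (then $\bar x(T)=y^{(i)}(T)+\delta$), so we output $T$. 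If every call certifies $y^{(i)}\in\Pcore$, we assert $\bar x\in\almostCore$: were there an $S_0\subsetneqq N$ with $\bar x(S_0)>c(S_0)$, picking any $i\in N\setminus S_0$ would give $y^{(i)}(S_0)=\bar x(S_0)>c(S_0)$, contradicting $y^{(i)}\in\Pcore$. The points $y^{(i)}$ and the scalar $\delta$ have polynomially bounded encoding length, so this is a polynomial Turing reduction; together with the first direction and the Gr\"otschel--Lov\'asz--Schrijver equivalence this proves the theorem. (If the core of $(N,c)$ is empty there is nothing to do, since then $\Pcore=\almostCore$, as already observed.)
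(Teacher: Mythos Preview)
Your proof is correct and follows essentially the same route as the paper: reduce to separation via Gr\"otschel--Lov\'asz--Schrijver, handle the easy direction by checking the single extra inequality $x(N)\le c(N)$, and for the hard direction shift one coordinate of $\bar x$ down so that the grand-coalition constraint becomes tight, doing this once per coordinate. The paper's points $\hat{x}^k$ with $\hat{x}^k_k=\min\bigl(\hat{x}_k,\,c(N)-\sum_{i\neq k}\hat{x}_i\bigr)$ coincide with your $y^{(k)}=\bar x-\delta e_k$ whenever $\bar x(N)>c(N)$ and with $\bar x$ itself otherwise, so the only cosmetic difference is that you spend an extra preliminary oracle call on $\bar x$ to split cases, whereas the paper absorbs both cases into the $\min$.
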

%\Matthias{This ``with non-empty core'' is not really necessary for the correctness. Marc, you still wanted to emphasize where the theorem finds its application?!}
\begin{proof}
  In order to prove the result we make use of the equivalence of optimization and separation~\cite{GroetschelLS81,KarpP80,PadbergR81}.
  This means, we show that we can solve the separation problem for $\Pcore$ if and only if we can solve the separation problem for $\almostCore$.
  Since $\Pcore = \{ x \in \almostCore : x(N) \leq c(N) \}$ holds, separation over $\Pcore$ reduces to separation over $\almostCore$ plus an explicit check of a single inequality.

  Hence, it remains to show how to solve the separation problem for $\almostCore$.
  For given $\hat{x} \in \R^n$, we construct $n$ points $\hat{x}^k \in \R^n$ ($k = 1,2,\dots,n$) which are copies of $\hat{x}$ except for $\hat{x}^k_k \coloneqq \min(\hat{x}_k, c(N) - \sum_{i \in N \setminus \{k\}} \hat{x}_i)$.
  Note that by construction $\hat{x}^k \leq \hat{x}$ and $\hat{x}^k(N) \leq c(N)$ hold.
  We then query a separation oracle of $\Pcore$ with each $\hat{x}^k$.

  Suppose such a query yields $\hat{x}^k(S) > c(S)$ for some $S \subseteq N$.
  Due to $\hat{x}^k(N) \leq c(N)$ we have $S \neq N$.
  Moreover, $\hat{x} \geq \hat{x}^k$ implies $\hat{x}(S) > c(S)$, and we can return the same violated inequality.

  Otherwise, we have $\hat{x}^k \in \Pcore$ for all $k \in N$ and claim $\hat{x} \in \almostCore$.
  To prove this claim we assume that, for the sake of contradiction, $\hat{x}(S) > c(S)$ holds for some $S \subsetneqq N$.
  Let $k \in N \setminus S$.
  Since $\hat{x}^k_i = \hat{x}_i$ holds for all $i \in S$, we have $\hat{x}^k(S) = \hat{x}(S) > c(S)$.
  This contradicts the fact that $\hat{x}^k \in \Pcore$ holds.
\end{proof}

It turns out that almost the same result is true when we also require that there are no subsidies, that is $x\ge 0$.
For linking the non-negative core to the non-negative almost core, it requires an assumption on the characteristic function.
\begin{equation}
  c(N\setminus\{k\}) \leq c(N) \qquad \forall k \in N.
  \label{eq_last_monotone}
\end{equation}
This condition holds, for instance, for monotone functions $c$, and implies that the core is contained in $\Rnonneg^n$; see Lemma~2 and Theorem~1 in~\cite{DrechselK10}.

\begin{theorem}
  \label{thm_opt_last_monotone}
  For a family of games $(N,c)$ satisfying~\eqref{eq_last_monotone}, linear optimization problems over $\almostCore \cap \Rnonneg^n$ can be solved in polynomial time if and only if linear optimization problems over $\Pcore \cap \Rnonneg^n$ can be solved in polynomial time.
\end{theorem}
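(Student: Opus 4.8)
The plan is to mimic the proof of Theorem~\ref{thm_opt_general}, again using the polynomial equivalence of optimization and separation, but now for the polyhedra intersected with the nonnegative orthant. As before, separation over $\Pcore \cap \Rnonneg^n$ reduces trivially to separation over $\almostCore \cap \Rnonneg^n$ plus checking the single inequality $x(N) \le c(N)$ (the nonnegativity constraints $x \ge 0$ are checked explicitly in both cases). So the crux is to reduce separation over $\almostCore \cap \Rnonneg^n$ to separation over $\Pcore \cap \Rnonneg^n$.

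First I would handle the nonnegativity: given $\hat x \in \R^n$, if some $\hat x_i < 0$ we return that violated inequality immediately, so we may assume $\hat x \ge 0$. Then I would reuse the construction from Theorem~\ref{thm_opt_general}: build $\hat x^k$ for $k \in N$ by lowering only coordinate $k$ to $\hat x^k_k \coloneqq \min(\hat x_k,\; c(N) - \sum_{i \in N \setminus \{k\}} \hat x_i)$. The point $\hat x^k$ still satisfies $\hat x^k \le \hat x$ and $\hat x^k(N) \le c(N)$ as before. The new issue is that $\hat x^k_k$ could be negative, so $\hat x^k$ may not lie in $\Rnonneg^n$ and cannot be fed to the oracle for $\Pcore \cap \Rnonneg^n$. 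This is where assumption~\eqref{eq_last_monotone} enters: if $\hat x^k_k < 0$, then in particular $\sum_{i \in N \setminus \{k\}} \hat x_i > c(N) \ge c(N \setminus \{k\})$, so $\hat x(N \setminus \{k\}) > c(N \setminus \{k\})$ and we have found a violated almost-core inequality (note $N \setminus \{k\} \subsetneqq N$) and can return it directly. Otherwise $\hat x^k \ge 0$ and we may legitimately query the oracle for $\Pcore \cap \Rnonneg^n$ with $\hat x^k$.

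The rest of the argument is then identical to Theorem~\ref{thm_opt_general}: if a query returns a violated inequality $\hat x^k(S) > c(S)$, then $S \ne N$ (because $\hat x^k(N) \le c(N)$) and $\hat x \ge \hat x^k$ gives $\hat x(S) > c(S)$, which we return. If no query finds a violation, every $\hat x^k$ lies in $\Pcore \cap \Rnonneg^n$; assuming for contradiction $\hat x(S) > c(S)$ for some $S \subsetneqq N$, pick $k \in N \setminus S$, and since $\hat x^k$ agrees with $\hat x$ on $S$ we get $\hat x^k(S) = \hat x(S) > c(S)$, contradicting $\hat x^k \in \Pcore$. Hence $\hat x \in \almostCore$, and together with $\hat x \ge 0$ this gives $\hat x \in \almostCore \cap \Rnonneg^n$, completing the separation reduction and therefore the equivalence of the optimization problems.

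The main obstacle, and the only place the proof differs from Theorem~\ref{thm_opt_general}, is precisely the possibility that the "repaired" point $\hat x^k$ leaves the nonnegative orthant; assumption~\eqref{eq_last_monotone} is exactly what is needed to convert that failure case into a certificate of infeasibility for $\almostCore \cap \Rnonneg^n$, rather than leaving us stuck with a point we cannot query.
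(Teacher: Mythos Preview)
Your proposal is correct and matches the paper's approach exactly: the paper states only that the proof is ``a rather straightforward extension of that of Theorem~\ref{thm_opt_general}, additionally making use of condition~\eqref{eq_last_monotone} to guarantee nonnegativity,'' and what you wrote is precisely that extension, with the key step being that $\hat{x}^k_k < 0$ forces $\hat{x}(N\setminus\{k\}) > c(N) \geq c(N\setminus\{k\})$ via~\eqref{eq_last_monotone}.
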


The proof is a rather straightforward extension of that of Theorem~\ref{thm_opt_general}, additionally making use of condition \eqref{eq_last_monotone} to guarantee nonnegativity. We obtain an immediate consequence from these two theorems.

\begin{corollary}\label{cor:submodular_ac_problem}
  For a family of games $(N,c)$ for which $c(\,\cdot\,)$ is submodular (and~\eqref{eq_last_monotone} holds) one can find a (non-negative) optimal almost core allocation in polynomial time.
\end{corollary}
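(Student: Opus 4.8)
The plan is to combine the two preceding theorems with the classical fact that linear optimization over the core of a submodular cost game is tractable, so this corollary should be a one-line deduction; the only real content is to spell out the reduction chain clearly.

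First I would recall the well-known result that for a submodular cost function $c$, the polyhedron $\core[]$ (equivalently, the base polytope of the associated polymatroid) admits polynomial-time linear optimization via the greedy algorithm, and more to the point that $\Pcore = \{ x \in \R^n : x(S) \le c(S) \ \forall S \subseteq N \}$ is the (extended, i.e.\ ``submodular'') polyhedron, over which the greedy algorithm also optimizes any linear objective in polynomial time — this is standard polymatroid theory (Edmonds). One subtlety to mention: linear optimization over $\Pcore$ may be unbounded for some objectives (e.g.\ decreasing a coordinate indefinitely when $c$ is monotone), but the greedy algorithm detects this, so ``solving the linear optimization problem'' is well-defined in the sense of returning either an optimal vertex or a certificate of unboundedness, and the equivalence-of-optimization-and-separation machinery used in Theorems~\ref{thm_opt_general} and~\ref{thm_opt_last_monotone} handles this correctly.

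Then the deduction is immediate: since linear optimization over $\Pcore$ is polynomial, Theorem~\ref{thm_opt_general} gives that linear optimization over $\almostCore$ is polynomial, and in particular maximizing $x(N) = \onevec \cdot x$ over $\almostCore$ — problem~\eqref{eq:main_AC_problem} — can be solved in polynomial time, yielding an optimal almost core allocation. For the non-negative case, I would additionally invoke that submodular polyhedra intersected with $\Rnonneg^n$ still admit polynomial linear optimization (the greedy algorithm restricted to the non-negative orthant, or simply noting that $\Pcore \cap \Rnonneg^n$ has a compact separation oracle: separate over $\Pcore$, then check $x \ge 0$), and apply Theorem~\ref{thm_opt_last_monotone} under hypothesis~\eqref{eq_last_monotone} to conclude that a non-negative optimal almost core allocation is computable in polynomial time.

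The main obstacle — really the only thing requiring care — is making sure the ``linear optimization'' primitive invoked in Theorems~\ref{thm_opt_general} and~\ref{thm_opt_last_monotone} matches what the greedy/polymatroid algorithm actually delivers, i.e.\ that potential unboundedness and the handling of the non-negativity constraints are consistent with the separation-based equivalence. Since the paper already phrases those theorems in terms of ``linear optimization problems can be solved in polynomial time'' via the Grötschel–Lovász–Schrijver framework, which explicitly accommodates unboundedness and emptiness certificates, nothing further is needed; the corollary follows.
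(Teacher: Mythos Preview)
Your proposal is correct and follows essentially the same approach as the paper: invoke Edmonds' greedy algorithm for linear optimization over the submodular polyhedron $\Pcore$, then apply Theorems~\ref{thm_opt_general} and~\ref{thm_opt_last_monotone} to transfer tractability to $\almostCore$ (respectively $\almostCore \cap \Rnonneg^n$). The paper's proof is a two-liner that omits your careful remarks on unboundedness and the non-negative separation oracle, but the logical structure is identical.
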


\begin{proof}
  For submodular $c(\,\cdot\,)$ one can optimize any linear objective function over $\Pcore$ using the Greedy algorithm~\cite{Edmonds1970}.  The result follows from \cref{thm_opt_general,thm_opt_last_monotone}.
\end{proof}

One might wonder if for submodular $c(\,\cdot\,)$, one can find a combinatorial algorithm to compute an optimal almost core allocation, instead of relying on the equivalence of optimization and separation. After all, for submodular $c(\,\cdot\,)$, $\Pcore$ is a polymatroid, so $\almostCore$ is a polymatroid minus the single linear constraint $x(N)\le c(N)$. We leave this as an open problem.

The above results only make statements about optimizing arbitrary objective vectors over these polyhedra.
In particular we cannot draw conclusions about hardness of the computation of an optimal almost core allocation, which is maximizing $\onevec\cdot x$ over $\almostCore$.
However, it is easy to see that this problem cannot be easier than deciding non-emptiness of the core, as the core of a game $(N,c)$ is non-empty if and only if the almost core optimum is at least $c(N)$.
Hence we immediately get the following.

\begin{theorem}\label{thm-hardness-via-emptiness}
  Consider a family of games $(N,c)$ for which deciding (non-)emptiness of the core is $\cplxNP$-hard.
  Then an efficient algorithm to compute an optimal almost core allocation cannot exist, unless $\cplxP = \cplxNP$.
\end{theorem}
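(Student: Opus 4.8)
The plan is to exploit the observation that was made immediately before the statement: the core of a game $(N,c)$ is non-empty if and only if the almost core optimum equals (or exceeds) $c(N)$. Concretely, by Theorem~\ref{thm_equivalence} and the definition of $\almostCore$, the optimum of~\eqref{eq:main_AC_problem} is always at least $c(N)$ when the core is non-empty (any core allocation is feasible and budget balanced), and it is at most $c(N)$ precisely when the core is empty (since then the constraint $x(N) \le c(N)$ is implied by the constraints $x(S) \le c(S)$, $S \subsetneqq N$, as noted at the start of Section~\ref{sec_complexity}). Hence the value of the almost core optimum being $\ge c(N)$ is a yes/no certificate for core non-emptiness.

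First I would assume, for contradiction, that there is a polynomial-time algorithm $\mathcal{A}$ that computes an optimal almost core allocation $x^\star$ for every game in the given family. Given an instance $(N,c)$ of the core non-emptiness problem, I would run $\mathcal{A}$ to obtain $x^\star$, compute $x^\star(N) = \sum_{i \in N} x^\star_i$ in polynomial time, and compare it against $c(N)$ (which is part of the input, or computable from it in polynomial time). I then answer ``core non-empty'' if and only if $x^\star(N) \ge c(N)$. By the equivalence recalled above, this correctly decides core non-emptiness, and the whole procedure runs in polynomial time. This contradicts the assumed $\cplxNP$-hardness of deciding core (non-)emptiness for the family, unless $\cplxP = \cplxNP$.

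A couple of small points need care but are routine: one must make sure the reduction is faithful to whatever encoding/oracle model the family of games uses (i.e., the game $(N,c)$ handed to $\mathcal{A}$ is exactly the one whose core-emptiness we wish to decide, with no change of representation), and one must note that computing $x^\star(N)$ and evaluating $c(N)$ add only polynomial overhead. There is no genuine obstacle here — the statement is essentially an immediate corollary of the preceding discussion — so the ``hard part'' is merely stating the reduction cleanly rather than proving anything substantive; the real content lies in Section~\ref{sec_MST_games}, where this theorem is applied to a concrete family (MST games) for which core-emptiness — or rather the relevant hardness — is established separately.
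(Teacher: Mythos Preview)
Your proposal is correct and takes essentially the same approach as the paper, which in fact does not give a formal proof at all but treats the theorem as an immediate consequence of the observation that the core is non-empty if and only if the almost core optimum is at least $c(N)$. One small quibble: your phrasing ``at most $c(N)$ precisely when the core is empty'' is not quite right (the optimum can equal $c(N)$ when the core is non-empty), but the reduction itself uses only the correct equivalence $x^\star(N) \ge c(N) \iff \core \neq \emptyset$, so this does not affect the argument.
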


It is well known that there exist games for which it is $\cplxNP$-hard to decide non-emptiness of the core, e.g., weighted graph games~\cite{deng1994}, or unrooted metric traveling salesperson games~\cite{Caprara2010}.
Hence, we cannot hope for a polynomial-time algorithm that computes an optimal almost core allocation for arbitrary games.

In contrast, the maximization of $x(N)$ becomes trivial for games $(N,c)$ with superadditive characteristic function $c(\,\cdot\,)$, as the set of constraints $x(\{i\})\le c(\{i\})$, $i=1,\dots,n$, already imply all other constraints $x(S)\le c(S)$, $S\subseteqq N$, and one can simply define $x_i\coloneqq c(\{i\})$.
In particular, $x(N)\le c(N)$ is implied and $\Pcore = \almostCore$.
Generalizing, the same is true for classes of games where a polynomial number of constraints can be shown to be sufficient to define the complete core.
As an example we mention \emph{matching games} in undirected graphs \cite{KernP2003}, where the core is described by the polynomially many core constraints induced by all edges of the underlying graph, as these can be shown to imply all other core constraints. 
\begin{proposition}
\label{prop_superadditive}
  Whenever $\Pcore$ is described by a polynomial number of inequalities, finding an optimal (almost) core allocation can be done in polynomial time by linear programming. 
\end{proposition}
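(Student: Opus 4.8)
The plan is to reduce both tasks to solving linear programs of polynomial size, so that any standard polynomial-time LP algorithm (e.g.\ the ellipsoid method or an interior-point method) finishes the job. Suppose that, for the family under consideration, $\Pcore$ is presented by a system of $\mathrm{poly}(n)$ linear inequalities; in the examples of interest (such as matching games) these are even a polynomial-size subfamily $\{x(S) \le c(S) : S \in \mathcal{F}\}$ of the core constraints themselves that already implies all of $\Pcore$. Then $\max\{\onevec\cdot x : x \in \Pcore\}$, and more generally $\max\{w\cdot x : x \in \Pcore\}$ for any objective $w$, is a linear program with $n$ variables and polynomially many constraints, hence solvable in polynomial time. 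If in addition the core is non-empty, then it equals $\{x \in \Pcore : x(N) = c(N)\}$, so an optimal core allocation with respect to any linear objective is found by solving the same LP with the single extra equation $x(N) = c(N)$, which keeps the instance of polynomial size.

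For the almost core the only genuine point to check is that a small description of $\Pcore$ does not, verbatim, give a description of $\almostCore$: the latter drops exactly the constraint $x(N) \le c(N)$ while keeping $x(S) \le c(S)$ for all $S \subsetneq N$. The cleanest way around this is to invoke Theorem~\ref{thm_opt_general}: since linear optimization over $\Pcore$ is polynomial-time for this family, the same holds for $\almostCore$, and in particular the almost core optimum and an optimal almost core allocation are computable in polynomial time. Alternatively, and fully self-contained, I would argue that the system $\{x(S) \le c(S) : S \in \mathcal{F},~ S \neq N\}$ already describes $\almostCore$. The key observation is that for $T \subsetneq N$ any Farkas certificate of validity $\sum_{S} \lambda_S \onevec[S] = \onevec[T]$, $\lambda \ge 0$, $\sum_S \lambda_S c(S) \le c(T)$ must have $\lambda_S = 0$ whenever $S \ni i$ for some $i \in N \setminus T$; since $N \setminus T \neq \emptyset$ this forces $\lambda_S = 0$ for all $S \not\subseteq T$, in particular $\lambda_N = 0$. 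Hence the inequality indexed by $N$ is never needed to derive any $x(T) \le c(T)$ with $T \subsetneq N$, so removing it from a valid description of $\Pcore$ leaves a valid (polynomial-size) description of $\almostCore$, over which we again optimize by linear programming.

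I do not expect a real obstacle; the single thing that needs care is precisely this bridge from a compact description of $\Pcore$ to efficient optimization over $\almostCore$, which is why the argument leans on Theorem~\ref{thm_opt_general} (or the elementary support observation above) rather than treating the two polyhedra as interchangeable.
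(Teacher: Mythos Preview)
Your argument is correct. The paper does not give a formal proof of this proposition at all; it is stated as an immediate consequence of the preceding discussion, where the implicit reasoning is simply that a polynomial-size description of $\Pcore$ turns the problem into an explicit linear program. Your write-up is in fact more careful than the paper on the one nontrivial point: the paper's surrounding text leans on the special situation (superadditive games, matching games) in which the polynomial subfamily of core constraints already implies $x(N)\le c(N)$, so that $\Pcore=\almostCore$ and the same LP solves both problems. You correctly observe that this need not hold verbatim in general and close the gap either via Theorem~\ref{thm_opt_general} or via the Farkas support argument showing that the $N$-indexed inequality is never used to derive $x(T)\le c(T)$ for $T\subsetneq N$. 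Both routes are valid; the Farkas observation has the added benefit of giving an explicit polynomial-size description of $\almostCore$ (namely $\mathcal{F}\setminus\{N\}$) rather than merely a polynomial-time optimization oracle.
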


Note that \cref{prop_superadditive} includes supermodular cost functions.
It is therefore interesting to note that for supermodular cost games, it is $\cplxNP$-hard to approximate the least core value $\strongEpsilon$ better than a factor $17/16$~\cite{SchulzU2010}.
The reason for this discrepancy is the simple fact that the least core is based on the strong $\varepsilon$-core, while the almost core relates to the weak $\varepsilon$-core as per Theorem~\ref{thm_equivalence}.

It also turns out that condition~\eqref{eq_last_monotone} implies that the value of an almost core allocation cannot exceed that of a core allocation by much.
\begin{proposition}
  \label{thm_last_monotone_approximation}
  Let $(N,c)$ be a game that satisfies~\eqref{eq_last_monotone}.
  Then every $x \in \almostCore$ satisfies
  \begin{equation*}
    x(N) \leq \big(1 + \tfrac{1}{n-1} \big) c(N)\ .
  \end{equation*}
\end{proposition}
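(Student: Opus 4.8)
The plan is to bound $x(N)$ for an arbitrary $x \in \almostCore$ by summing up a well-chosen collection of coalitional stability constraints. The natural candidate is the family of the $n$ sets of the form $N \setminus \{k\}$ for $k \in N$, each of which is a proper subset of $N$, so that $x(N \setminus \{k\}) \leq c(N \setminus \{k\})$ holds for every $k$. Summing these $n$ inequalities, the left-hand side becomes $\sum_{k \in N} x(N \setminus \{k\}) = (n-1)\, x(N)$, since each agent $i$ appears in exactly $n-1$ of the sets $N \setminus \{k\}$ (namely all but $k = i$). Hence
\begin{equation*}
  (n-1)\, x(N) \;\leq\; \sum_{k \in N} c(N \setminus \{k\})\,.
\end{equation*}

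The second step is to invoke condition~\eqref{eq_last_monotone}, which gives $c(N \setminus \{k\}) \leq c(N)$ for every $k \in N$, so the right-hand side is at most $n \cdot c(N)$. Combining the two steps yields $(n-1)\, x(N) \leq n \cdot c(N)$, i.e.\ $x(N) \leq \tfrac{n}{n-1} c(N) = \big(1 + \tfrac{1}{n-1}\big) c(N)$, which is exactly the claimed bound. One should also note the degenerate case $n = 1$: then $\almostCore = \R^1$ (there are no constraints, since the only proper subset of $N$ is the empty set, which gives $0 \leq c(\emptyset)$ at best, and typically $c(\emptyset) = 0$), so the statement is vacuous or should be read for $n \geq 2$; I would add a parenthetical remark to that effect.

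I do not expect a genuine obstacle here — the argument is a two-line aggregation of constraints plus one application of the hypothesis. The only point requiring a modicum of care is the counting identity that each agent lies in precisely $n - 1$ of the coalitions $N \setminus \{k\}$, and the observation that every $N \setminus \{k\}$ is indeed a proper subset of $N$ so that the almost-core inequality applies to it; both are immediate. It may also be worth remarking, for context, that this bound is tight: if $x$ is an optimal almost core allocation it can attain $\big(1 + \tfrac{1}{n-1}\big) c(N)$ for suitable games (e.g.\ when all the constraints $x(N \setminus \{k\}) \leq c(N \setminus \{k\}) = c(N)$ are simultaneously tight), which dovetails with the MST examples in Section~\ref{sec_MST_games}.
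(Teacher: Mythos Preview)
Your proposal is correct and follows essentially the same approach as the paper: sum the almost-core constraints for the $n$ coalitions $N\setminus\{k\}$, use the counting identity to obtain $(n-1)\,x(N)$ on the left, and then apply~\eqref{eq_last_monotone} to bound the right-hand side by $n\cdot c(N)$. Your additional remarks on the degenerate case $n=1$ and on tightness are sensible extras but not part of the paper's proof.
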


\begin{proof}
  Let $x \in \almostCore$.
  We obtain
  \begin{align*}
    (n-1) \cdot x(N)
    =& \sum_{k \in N} x(N \setminus \{k\}) \\
    \leq& \sum_{k \in N}  c(N \setminus \{k\})
    \leq \sum_{k \in N} c(N)
    = n \cdot c(N),
  \end{align*}
  where the first inequality follows from feasibility of $x$ and the second follows from~\eqref{eq_last_monotone}.
\end{proof}

Condition~\eqref{eq_last_monotone} implies non-negativity for all core allocations and all optimal almost core allocations, as $x(N) \geq c(N)$, so $x_i \geq c(N) - c(N\setminus\{i\}) \geq 0$ for all $i\in N$.
However, this does not mean that a non-negativity requirement implies that the almost core optimum is close to $c(N)$.
In the next section, we will see that this gap can be arbitrarily large (see \cref{thm_mst_bad_approximation}).

\section{Minimum Cost Spanning Tree Games and Approximation}
\label{sec_MST_games}

In this section we address a well known special class of games known as minimum cost spanning tree (MST) games~\cite{ClausK1973,Bird1976,Granot1981}.
In MST games, the agents are nodes in an edge-weighted undirected graph $G$, and the cost of the outside option for a set of agents $S$ is determined by the cost of a minimum cost spanning tree in the subgraph induced by these agents.
MST games are known to have a non-empty core.
Moreover, it is known that finding some element in the core is computationally easy and can be done by computing a minimum cost spanning tree~\cite{Granot1981}.
The optimization problem that we address asks for the maximal amount that can be charged to the agents while no proper subset of agents would prefer the outside option. 

While in general, maximizing shareable costs is the same as asking for the maximum value $c(N)$ that still yields a non-empty core, the question may appear a bit artificial for minimum cost spanning tree games, as there, the value $c(N)$ is computed as the minimum cost of a spanning tree for all agents $N$. 
From a practical viewpoint this can be motivated by assuming there are exogenous physical or legal restrictions that prohibit the grand coalition $N$ to form, so that the player set $N$ has no bargaining power.

Apart from that, there is a more theoretical perspective that motivates studying the almost core for MST games.
One can easily see that for MST games the cost function $c(\,\cdot\,)$ is subadditive, yet it is not submodular in general; see \cite{KohSanita2020} for a characterization when it is.
Recalling that the computation of maximum shareable costs can be done in polynomial time when $c(\,\cdot\,)$ is submodular, it is a natural question to ask if this still holds for subadditive cost functions.
In that respect, note that the weighted graph games as studied in~\cite{deng1994} have polynomial-time algorithms to decide non-emptiness of the core whenever $c(\,\cdot\,)$ is subadditive.
Hence Theorem~\ref{thm-hardness-via-emptiness} applied to weighted graph  games does not give an answer to this question.
Also the results of~\cite{Chalkiadakis2016} yield that
there exist subadditive cost games for which the computation of the price of stability, hence computation of the almost core optimum is hard, yet that result also relies on hardness of the problem to decide non-emptiness of the core.
We next show that even for MST games with monotone cost function $c(\,\cdot\,)$, despite always having a non-empty core, maximizing shareable costs cannot be done efficiently unless $\cplxP = \cplxNP$.

%That problem can be seen as a special form of restricted cooperation in cooperative game theory. 

%As a matter of fact, cooperative game theory with restrictions on coalition formation has a rich history, see, e.g., \cite{KoshevoyEtAL2017} for some references, and has been pioneered by Myerson~\cite{Myerson1977} where the possibility of coalition formation is modeled by connectedness in a given communication graph. More generally, one may define a (downward-closed) set system that describes all those subsets of agents that are able to cooperate and hence have access to an outside option, while all other subsets do not have that option. The almost core then arises as the special case where the set system is the $(n-1)$-uniform matroid, that is, all subsets of agents except the grand coalition $N$ have an outside option. It turns out that that even this special case leads to a non-trivial optimization problem for MST games. We leave further generalizations for future research.  

%, while linear optimization over the core is hard. 
%This led us to believe that an algorithm to compute an almost core optimum might be in reach.

%\red{One motivation to consider the almost core optimization problem for MST games is sheer combinatorial curiosity: \red{Indeed, for MST games,} finding some element in the core is computationally easy, while linear optimization over the core is hard.
%This led us to believe that an algorithm to compute an almost core optimum might be in reach.
%\red{The problem turns out to be harder than one might expect at first sight, however. }

\subsection{Preliminaries}
Let us first formally define the problem and recall what is known. We are given an edge-weighted, undirected graph $G=(N\cup\{0\},E)$ with non-negative edge weights $w:E\to \R_{\ge 0}$, where node $0$ is a special node referred to as ``supplier'' node.
Without loss of generality we may assume that the graph is complete by adding dummy edges with large enough cost.
The agents of the game are the vertices $N$ of the graph, and the characteristic function of the game is given by  minimum cost spanning trees.
That is, the cost of any subset of vertices $S\subseteq N$ is defined as the cost of a minimum cost spanning tree on the subgraph induced by vertex set $S\cup\{0\}$. So if we let 
$\mathcal{T}(S)$ be the set of spanning trees for the subgraph induced by vertex set $S\cup\{0\}$, then the characteristic function is defined as:
%\[
%c(S) \coloneqq  \min_{T \subseteq S}\left\{ \sum_{e\in T} w(e)\ |\ T\ \text{is MST for}\ G(S)\right\} \,.
%\]
%\Rong{should it be $T\subseteq G(S)$ in the above definition?}
%\Matthias{Actually, if the condition is ``$T$ is MST'', then ``having minimum weight'' is already part of the definition, so taking a minimum would not be necessary. Here is an alternative:}
\[
  c(S) \coloneqq \min_{T\in\mathcal{T}(S)} \left\{ %\sum\nolimits_{e \in T} w(e)\ 
  w(T)
  \right\}\,.
  %|\ T\ \text{is a tree spanning vertex set } S \cup \{0\}\ \right\} \,.
\]

Following~\cite{Granot1981}, the associated monotonized minimum cost spanning tree game $(N,\bar{c})$ is obtained by defining the characteristic function using the monotonized cost function 
$
\bar{c}(S)\coloneqq \min_{R\supseteq S}c(R)\,.
$  
This is motivated by assuming that agents can also use other agents as ``Steiner nodes''.
Indeed, note that $\bar{c}(S)\le \bar{c}(R)$ for $S\subseteq R$, and for the associated cores of these two games, we have that $\core[(N,\bar{c})] \subseteq \core$.
Moreover, it is well known that the core of both games is non-empty, and a core allocation $x \in \core[(N,\bar{c})]$ is obtained in polynomial time by just one minimum cost spanning tree computation: if $T$ is some MST, let $e_v\in T$ be the edge incident with $v$ on the unique path from $v$ to the supplier node $0$ in $T$, then letting
\[
  x_v \coloneqq w(e_v)\,,
\]
one gets an element $x$ in the core of the monotonized minimum cost spanning tree game $(N,\bar{c})$~\cite{Granot1981}, and hence also a core element for the game $(N,c)$.
One convenient way of thinking about this core allocation is a run of Prim's algorithm to compute minimum cost spanning trees~\cite{Prim57}: 
starting to build the tree with vertex $0$, whenever a new vertex $v$ is added to the spanning tree constructed so far, $v$ gets charged the cost of the edge $e_v$ that connects $v$.

In summary, computing \emph{some} core allocation can be done efficiently, while linear optimization over the core of MST games is co-NP hard (under Turing reductions)~\cite{Faigle1997}.
We are interested in the same questions but for the case that the budget balance constraint is absent. So we seek
solutions to the almost core maximization problem
\begin{equation}
  \label{eq:MSTAlmostCoreMaxProblem_general}
  \max x(N)\ s.t.\ x\in\almostCore\,, %\text{and} \ x\ge 0\,,
\end{equation}
when $c(\cdot)$ is the characteristic function defined by minimum cost spanning trees. 
%That means that we ask for cost shares per player that maximize the total cost that can be charged to the set of all players, while no  subset of players $S\subsetneqq N$ would prefer to deviate to deviate and establish a spanning tree of cost $c(S)$ instead.  
The interpretation of the lacking constraint $x(N)=c(N)$ is that the grand coalition cannot establish the solution with cost $c(N)$ on its own, say by legal restrictions.

\subsection{Computational Complexity}

As a first result, and not surprising, linear optimization over the almost core is hard for MST games.
\begin{corollary}
  \label{thm_opt_mst_hard}
  For minimum cost spanning tree games $(N,c)$, a polynomial-time algorithm for linear optimization over $\almostCore$ 
  would yield \textup{P}$=$\textup{NP}.
\end{corollary}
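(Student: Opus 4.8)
The plan is to reduce linear optimization over the core of an MST game to linear optimization over $\almostCore$, and then invoke the known $\cplxcoNP$-hardness of the former, due to Faigle et al.~\cite{Faigle1997}. So assume, for contradiction, that linear optimization over $\almostCore$ can be solved in polynomial time for the family of MST games. By Theorem~\ref{thm_opt_general} this yields a polynomial-time algorithm for linear optimization over $\Pcore$ for MST games.

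The key observation in the next step is that the core of an MST game is an exposed face of $\Pcore$. Indeed, since the inequality $x(N) \le c(N)$ is among the inequalities defining $\Pcore$, one has the identity $\core = \{ x \in \Pcore : x(N) \ge c(N) \}$; moreover $c(N)$ is the cost of a minimum spanning tree of $G$ and hence computable in polynomial time. I would then use the polynomial equivalence of optimization and separation~\cite{GroetschelLS81,KarpP80,PadbergR81} twice: a polynomial-time algorithm for linear optimization over $\Pcore$ provides a polynomial-time separation oracle for $\Pcore$; appending the single explicit check ``is $x(N) \ge c(N)$?'' turns it into a polynomial-time separation oracle for $\core$; and since MST games are balanced with a polynomially computable core member (for instance the Bird/Prim allocation $x_v = w(e_v)$ of~\cite{Granot1981}), and $\core$ is bounded and has polynomial facet complexity, this separation oracle in turn yields a polynomial-time algorithm for linear optimization over $\core$ of MST games.

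Finally, linear optimization over the core of MST games is $\cplxcoNP$-hard under Turing reductions~\cite{Faigle1997}, so the existence of such a polynomial-time algorithm would imply $\cplxP = \cplxcoNP$, and hence $\cplxP = \cplxNP$. I do not expect a genuine obstacle; the only point that needs care is the bookkeeping in the optimization--separation argument, namely that $\core$ is obtained from $\Pcore$ (equivalently, from $\almostCore$ after the explicit check $x(N) = c(N)$) by adding one explicit inequality, and that all polyhedra involved are well-described and equipped with a polynomially computable feasible point so that the Gr\"otschel--Lov\'asz--Schrijver machinery applies in both directions.
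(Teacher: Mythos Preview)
Your proof is correct and follows essentially the same approach as the paper: invoke Theorem~\ref{thm_opt_general} to pass from $\almostCore$ to $\Pcore$, then appeal to the $\cplxcoNP$-hardness result of Faigle et al.~\cite{Faigle1997}. The paper's version is marginally more direct---a separation oracle for $\Pcore$ already decides core membership (just check $x(N)=c(N)$ explicitly), so the extra detour back to optimization over $\core$ is unnecessary---but this is cosmetic.
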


\begin{proof}
  The result follows from \cref{thm_opt_general} and the fact that the membership problem for the core of $(N,c)$ is a co$\cplxNP$-hard problem for MST games~\cite{Faigle1997}.
  \qed
\end{proof}

What is more interesting is that optimizing $\onevec\cdot x$ over the almost core remains hard for MST games.
\begin{theorem}
  \label{prop_hardness_MSt_AC}
  Computing an optimal almost core allocation in \eqref{eq:MSTAlmostCoreMaxProblem_general} for minimum cost spanning tree games is $\cplxNP$-hard, and this is also true for monotonized minimum cost spanning tree games $(N,\bar{c})$.
\end{theorem}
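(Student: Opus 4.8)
The plan is to give a polynomial-time reduction from a known $\cplxNP$-hard problem; a genuine construction is required here, since \cref{thm-hardness-via-emptiness} is vacuous for MST games (their core is always non-empty) and \cref{thm_opt_mst_hard} only addresses arbitrary objective vectors, whereas now the objective is fixed to $\onevec\cdot x$. It is convenient to pass to the dual of~\eqref{eq:MSTAlmostCoreMaxProblem_general}: by LP duality the almost core optimum equals
\[
  \min\Bigl\{ \sum_{\emptyset\neq S\subsetneqq N} y_S\, c(S) \;:\; \sum_{S\ni i} y_S = 1 \ \ \forall\, i\in N,\ \ y\ge 0 \Bigr\},
\]
the minimum cost of a fractional partition of $N$ into proper coalitions, where coalition $S$ is priced at $c(S)$. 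Equivalently, in the primal one always has $x(N)\le\sum_j c(S_j)$ for every partition $N=S_1\uplus\dots\uplus S_k$ into proper parts, and the almost core optimum is exactly the best such bound once fractional partitions are allowed. Hence it suffices to exhibit a family of MST games for which deciding whether this minimum weighted fractional partition value is at most a given threshold $K$ is $\cplxNP$-hard.

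For the construction I would start from an instance of a suitable covering- or partition-type problem (one natural route is to adapt the instances used by Faigle et al.~\cite{Faigle2000} to establish hardness of the $f$-least core of MST games) and build a graph with supplier node $0$, agent nodes encoding the combinatorial structure, and edge weights forming a \emph{metric}, realized e.g.\ as a shortest-path metric. Using a metric is helpful for two reasons: it forces $c(\,\cdot\,)$ to be monotone, so that $c=\bar c$ and the hardness transfers verbatim to the monotonized game $(N,\bar c)$; and it yields the clean identity that $\bar c(S)$ equals the minimum cost of a Steiner tree connecting $S\cup\{0\}$. The weights are then tuned so that (i) every feasible solution of the source instance induces an integral partition of $N$ into proper coalitions of total cost at most $K$, certifying that the almost core optimum is at most $K$; and (ii) conversely, from a source instance with no solution one can still produce a feasible almost core allocation $x$ with $x(N)>K$ — for instance a carefully perturbed version of the ``star'' allocation $x_i:=c(\{i\})$, whose feasibility is secured precisely by the absence of a solution. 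Together, (i) and (ii) make the almost core optimum at most $K$ if and only if the source instance is a yes-instance.

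The step I expect to be the main obstacle is the design of this gadget, because the values $c(S)$ are not free parameters: they are dictated by the graph and must obey triangle- and metric-type inequalities. The gadget has to simultaneously make the ``intended'' coalitions cheap, keep $c$ monotone, and make every other coalition — and, crucially, every spurious \emph{fractional} combination of coalitions — too expensive to undercut the threshold $K$. Controlling these fractional partitions is the delicate point: one either argues that an optimal fractional partition of the constructed instance may be taken integral (so it can be decoded into a structure of the source instance), or, on the no-side, exhibits an explicit feasible primal allocation certifying $x(N)>K$. Once the gadget and these two certificates are in place the reduction closes, and the statement for monotonized MST games is immediate since $c=\bar c$ on the constructed instances.
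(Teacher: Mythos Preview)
Your proposal is not a proof but a plan, and the plan leaves precisely the hard part unfinished: you yourself identify the gadget design and the control of fractional partitions as ``the main obstacle'' and then stop. Nothing in the write-up actually exhibits an instance family, a threshold $K$, or the two certificates you promise in (i) and (ii); so as it stands there is no reduction, only an outline of what a reduction would have to achieve.

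More importantly, the detour through a fresh gadget is unnecessary. The paper's proof is a three-line scaling argument that uses the result of Faigle et al.~\cite{Faigle2000} directly rather than re-engineering their instances. They show it is $\cplxNP$-hard, for MST games, to find the largest $\varepsilon$ (call it $\varepsilon^\star$) for which the system
\[
  x(S)\le (1-\varepsilon)\,c(S)\ \ \forall\,S\subsetneqq N,\qquad x(N)=c(N)
\]
is feasible. Now observe that this system is just a rescaling of the almost core: if $x^{\mathrm{AC}}$ is an optimal almost core allocation, set $\varepsilon^\star \coloneqq 1 - c(N)/x^{\mathrm{AC}}(N)$ and $x' \coloneqq (1-\varepsilon^\star)\,x^{\mathrm{AC}}$; then $x'(S)\le(1-\varepsilon^\star)c(S)$ for all proper $S$ and $x'(N)=c(N)$. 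Conversely, any feasible $x$ for a given $\varepsilon$ scales up by $1/(1-\varepsilon)$ to an almost core allocation with value $c(N)/(1-\varepsilon)$, so $\varepsilon^\star$ is recovered exactly from the almost core optimum. Hence a polynomial-time algorithm for~\eqref{eq:MSTAlmostCoreMaxProblem_general} would solve the $\cplxNP$-hard problem of~\cite{Faigle2000}. For the monotonized statement, the instances in the reduction of~\cite{Faigle2000} already have monotone $c$, so $c=\bar c$ there and nothing further is needed. You had the right reference in hand; the missing idea was that their optimal $\varepsilon$ and your almost core optimum are the \emph{same} quantity up to the bijection $\varepsilon\mapsto c(N)/(1-\varepsilon)$, so no new gadget is required.
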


\begin{proof}
Let $\varepsilon^\star$ be the largest $\varepsilon$ for which the linear inequality system
\begin{align}
\label{eq_faigle_system}
  x(S) \leq (1-\varepsilon)c(S)  ~\forall S \subsetneqq N, \quad x(N) = c(N)
\end{align}
has a solution.
In~\cite{Faigle2000} it is shown that finding a feasible solution $x$ for \eqref{eq_faigle_system} with respect to $\varepsilon^\star$ is $\cplxNP$-hard. Note that in the reduction leading to this hardness result, $c(N)>0$.
Then, given an optimum almost core allocation $x^{\text{AC}}$,  $x^{\text{AC}}(N)\ge c(N)>0$, and we can obtain $\varepsilon^\star \coloneqq 1 - c(N)/x^{\text{AC}}(N)$.  
It is now easy to see that the vector $x' \coloneqq (1-\varepsilon^\star) x^{\text{AC}}$ is a feasible solution for \eqref{eq_faigle_system}. 
To see that the so-defined $\varepsilon^\star$ is indeed maximal, observe that scaling any feasible vector in \eqref{eq_faigle_system} by $1/(1-\varepsilon^\star)$ yields an almost core allocation.
Hence, computation of an almost core optimum for MST games yields a solution for an $\cplxNP$-hard problem.
To see the last claim about monotonized minimum cost spanning tree games, observe that the underlying reduction from the $\cplxNP$-hard minimum cover problem in~\cite{Faigle2000} yields a minimum cost spanning tree game that has a monotone cost function $c(\cdot)$ by definition.
\end{proof}

Next, we note that in general, the almost core optimum may be arbitrarily larger than $c(N)$ for MST games. This is remarkable in view of \cref{thm_last_monotone_approximation}, which shows that under condition \eqref{eq_last_monotone}, any core allocation yields a good approximation for an optimal almost core allocation, as they differ by a factor at most $n/(n-1)$.
A fortiori, the same holds for the monotonized MST games $(N,\bar{c})$.
For general MST games $(N,c)$, and without condition \eqref{eq_last_monotone}, this gap can be large.
\begin{proposition}
  \label{thm_mst_bad_approximation}
  The almost core optimum can be arbitrarily larger than $c(N)$, even for minimum cost spanning tree games and when we require that $x\ge 0$.
\end{proposition}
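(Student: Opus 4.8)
The plan is to exhibit an explicit family of MST games, parameterized by $n$, for which the almost core optimum grows without bound while $c(N)$ stays fixed (or at least grows much more slowly). The natural construction is a ``star-like'' instance: take the supplier node $0$ connected cheaply to one designated agent, say agent $1$, via an edge of weight $1$, so that $c(N)$ is small (it suffices to make a spanning tree of the whole graph cheap, e.g. a path or star through $0$ of total weight roughly $n$ or even $O(1)$ if we are careful). The key is then to choose the remaining edge weights so that \emph{every proper subset} $S\subsetneqq N$ has a large outside-option cost $c(S)$, because any spanning tree on $S\cup\{0\}$ is forced to use expensive edges. If $c(S)$ is large for all $S\subsetneqq N$ but $c(N)$ is small, then the constraints $x(S)\le c(S)$ for $S\subsetneqq N$ are very loose, and we can push $x(N)=\sum_i x_i$ far above $c(N)$.

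Concretely, here is the construction I would try first. Let $G$ have supplier $0$ and agents $N=\{1,\dots,n\}$. Put weight $0$ (or a tiny positive weight, to keep edges genuine) on every edge among the agents $\{1,\dots,n\}$ and on the edge $\{0,1\}$, and put a huge weight $M$ on every edge $\{0,j\}$ for $j\ge 2$. Then $c(N)$ = weight of an MST on all of $N\cup\{0\}$ = $0$ (connect $0$ to $1$ for free, then connect everything else for free) — but this makes $c(N)=0$ and the ratio undefined, so instead give edge $\{0,1\}$ weight $1$, all agent-agent edges weight $0$, and edges $\{0,j\}$, $j\ge2$, weight $M$. Now $c(N)=1$. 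For a proper subset $S\subsetneqq N$: if $1\notin S$, then any spanning tree on $S\cup\{0\}$ must connect $0$ using some edge $\{0,j\}$ with $j\in S$, $j\ge 2$, costing $M$, so $c(S)\ge M$; if $1\in S$ but $S\ne N$, we still have $c(S)=1$. That is a problem: the subsets containing $1$ pin down $x(S)\le 1$. To fix this I would make the ``cheap gateway'' not a single agent but force \emph{any} connection of $0$ to be expensive unless you use \emph{all} agents — e.g. use a gadget where $0$ attaches cheaply only to a structure that requires the full set $N$. An alternative, cleaner fix: allow negative... no, we need $x\ge 0$. Better: use the monotonized-versus-non-monotonized distinction is not available here, so instead I would put the single cheap edge from $0$ to agent $1$ with weight $1$, agent-agent edges all weight $1$ as well (so connecting $k$ agents among themselves costs $k-1$), and edges $\{0,j\}$ for $j\ge 2$ equal to $M$. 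Then $c(N)=n$ (a path $0-1-2-\cdots-n$), while for $S\not\ni 1$ we get $c(S)=M+(|S|-1)$, and for $S\ni1$, $S\ne N$, $c(S)=|S|$. Setting $x_1 = 1$ and $x_j = $ something: constraints from $S\ni 1$ give $x(S)\le |S|$, i.e. $\sum_{i\in S}x_i\le |S|$ — still forces the average to be $\le 1$, so $x(N)\le n$. This keeps failing for the same structural reason, so the real work is designing the gadget correctly.

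The main obstacle, then, is exactly this: making $c(S)$ large for \emph{all} proper subsets, including those containing the designated cheap agent. The right idea is that the outside option for $S$ must be expensive whenever $S\ne N$, which means the cheap spanning tree of $N\cup\{0\}$ must not restrict to a cheap spanning tree of any $S\cup\{0\}$ — the cheap structure has to be ``globally interlocked.'' A concrete realization: let the cheap spanning structure of $N\cup\{0\}$ be a single Hamiltonian-type path but arrange the \emph{large} weights so that deleting any agent from $N$ disconnects the cheap structure and forces an expensive repair. For instance, take a cycle $0,1,2,\dots,n,0$ where each consecutive edge has weight $1$ and every chord and every non-consecutive edge to $0$ has weight $M$; then $c(N)=n$ (drop one cycle edge), but for any $S\subsetneqq N$, the induced cheap edges on $S\cup\{0\}$ form a disjoint union of paths, and reconnecting them into a spanning tree costs at least $M$ per missing ``link,'' giving $c(S)\ge M$ roughly. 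If I can push this through so that $c(S)\ge M$ for \emph{every} $S\subsetneqq N$ while $c(N)=n$, then the almost-core LP only imposes $x(S)\le c(S)$ for proper $S$ — and since for singletons $c(\{i\})\ge$ the cheap edge weight at $i$ but the binding global constraints are all $\ge M$, I can afford to set, say, $x_i := \min_{S\ni i,\, S\subsetneqq N} (\dots)$ — more simply, the constraint $x(N\setminus\{k\})\le c(N\setminus\{k\})$ is no longer implied by $x(N)\le c(N)$ (that constraint is gone), and $c(N\setminus\{k\})$ is on the order of $M$. Then choosing $x_i = 1$ for the ``gateway'' and $x_i = M/(n-1)$ or so for the rest yields $x(N)$ of order $M$, which is arbitrarily larger than $c(N)=n$ as $M\to\infty$, and all entries are nonnegative. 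Once the construction with the stated properties is in hand, verifying the almost-core feasibility of the proposed $x$ and computing $x(N)/c(N)\to\infty$ is a routine check; the sole creative step is nailing down a weight assignment that simultaneously keeps $c(N)$ bounded and forces $c(S)$ large for \emph{all} proper $S$, and I would spend the bulk of the proof on that and its verification via a short MST/connectivity argument (e.g., a cut argument showing any spanning tree of $S\cup\{0\}$, $S\ne N$, must cross an expensive cut).
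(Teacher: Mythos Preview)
Your proposal has a genuine gap: none of the constructions you sketch actually work, and the strategy you have converged on---forcing $c(S)$ to be large for \emph{every} proper $S\subsetneqq N$ while keeping $c(N)$ small---is both unnecessary and, in the form you propose, incorrect. In your cycle construction (consecutive edges of weight $1$, chords of weight $M$), deleting a single agent $k$ from $N$ leaves $S\cup\{0\}$ spanned by the two cheap arcs $0,1,\dots,k-1$ and $k+1,\dots,n,0$, which together form a spanning tree of cost $n-1$; hence $c(N\setminus\{k\})=n-1$, not $\ge M$. Summing the constraints $x(N\setminus\{k\})\le n-1$ over $k$ gives $(n-1)\,x(N)\le n(n-1)$, so $x(N)\le n=c(N)$, and the gap collapses entirely.

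More importantly, you talked yourself out of the simplest route. You explicitly discarded constructions with $c(N)=0$ because ``the ratio is undefined,'' but the statement only asks that the almost core optimum be arbitrarily larger than $c(N)$; an additive gap suffices, and $c(N)=0$ is perfectly acceptable. The paper's proof is a single three-agent example: set $w(0,1)=w(1,2)=w(2,3)=0$, $w(1,3)=k$, $w(0,2)=w(0,3)=2k$. Then $c(N)=0$ via the path $0\text{--}1\text{--}2\text{--}3$, while $c(\{1\})=c(\{1,2\})=0$ force $x_1=x_2=0$, and the only remaining binding constraint on $x_3$ is $x_1+x_3\le c(\{1,3\})=k$, so $x=(0,0,k)$ is feasible with $x(N)=k$. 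The point is not that all proper coalitions are expensive, but that a \emph{single} agent (here agent~$3$) connects cheaply to the grand coalition only through an intermediary (agent~$2$); removing that intermediary makes the relevant coalitions containing agent $3$ expensive enough to allow $x_3=k$.
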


\begin{proof}
Consider the instance depicted in Figure~\ref{fig:large_gap}, 
for some value $k>0$.
Then $c(N)=0$, while $x=(0,0,k)$ is an optimal non-negative  almost core allocation with value $k$.
\end{proof}

In the following we consider problem \eqref{eq:MSTAlmostCoreMaxProblem_general} but with the added constraint that $x\ge 0$.
\begin{equation}
  \label{eq:MSTAlmostCoreMaxProblem}
  \max x(N)\ s.t.\ x\in\almostCore \text{, and} \ x\ge 0\,.
\end{equation}
The presence of the constraint $x\ge 0$ means that agents must not be subsidized. As can be seen from the example in Figure~\ref{fig:subsidy_example}, such subsidies may indeed be necessary in the optimal solution to \eqref{eq:MSTAlmostCoreMaxProblem_general}, as there, the only optimal solution uses cost shares $(-k,k,k)$.
However, we next show that such subsidies are in some sense an artifact of ``small edge costs'' in graph $G$, and for optimization purposes can be neglected in the following sense.

%First, we show that even this restricted problem remains $\cplxNP$-hard for MST games. 
\begin{theorem}
  \label{thm:hardness_MSt_AC_nonneg}
  %Computing an optimal non-negative almost core allocation in \eqref{eq:MSTAlmostCoreMaxProblem} for minimum cost spanning tree games is $\cplxNP$-hard.
  Every instance of the almost core optimization problem \eqref{eq:MSTAlmostCoreMaxProblem_general} can be reduced in polynomial time to an instance of problem \eqref{eq:MSTAlmostCoreMaxProblem}.
  Consequently, problem~\eqref{eq:MSTAlmostCoreMaxProblem} is also $\cplxNP$-hard for minimum cost spanning tree games.
\end{theorem}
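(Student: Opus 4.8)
The plan is to take an arbitrary MST game instance $(N,c)$ given by a supplier node $0$, agent set $N$, and non-negative edge weights $w$, and transform it into a new MST game instance on the same combinatorial structure whose edge weights are uniformly shifted upward by a large constant $M$, so that in the new instance every agent necessarily receives a non-negative cost share in any almost core allocation, while the transformation preserves the optimal almost core value up to an additive, polynomially computable term. Concretely, I would define $w'(e) \coloneqq w(e) + M$ for every edge $e$ (including the dummy edges making the graph complete), for a value $M$ that is polynomial in the input size — e.g., $M \coloneqq \sum_{e} w(e) + 1$ suffices, or even $M$ equal to the maximum edge weight times $n$. The key observation is that a spanning tree on a vertex set $S \cup \{0\}$ has exactly $|S|$ edges, so the new characteristic function satisfies $c'(S) = c(S) + M\cdot|S|$ for all $S \subseteq N$; in particular the minimizing tree is unchanged, so $c'$ is again a genuine MST characteristic function on the shifted graph.

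The second step is to relate the almost cores of the two games. Substituting $x = x' + (M, M, \dots, M)$ — exactly the shift trick already used in the proof of Theorem~\ref{thm_equivalence} for the weak $\varepsilon$-core — the constraint $x'(S) \le c'(S) = c(S) + M|S|$ becomes $x(S) \le c(S)$, so $x' \mapsto x' - M\onevec$ is a bijection between $\almostCore[(N,c')]$ and $\almostCore[(N,c)]$, and it shifts the objective by exactly $Mn$: $x'(N) = x(N) + Mn$. Hence the almost core optimum of $(N,c')$ equals that of $(N,c)$ plus $Mn$, and optimal allocations correspond under the shift. It remains to check that the non-negativity constraint is automatically satisfied in the shifted instance: if $x' \in \almostCore[(N,c')]$ is \emph{optimal} (equivalently $x'(N) \ge c'(N)$, which holds since MST games have non-empty core so the optimum is at least $c'(N)$), then for each $k$, $x'_k = x'(N) - x'(N\setminus\{k\}) \ge c'(N) - c'(N\setminus\{k\}) = \bigl(c(N)+Mn\bigr) - \bigl(c(N\setminus\{k\})+M(n-1)\bigr) = M + c(N) - c(N\setminus\{k\}) \ge M - \sum_e w(e) > 0$ by the choice of $M$. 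So any optimal almost core allocation of $(N,c')$ lies in $\Rnonneg^n$, which means problem~\eqref{eq:MSTAlmostCoreMaxProblem} and problem~\eqref{eq:MSTAlmostCoreMaxProblem_general} have the same optimal value and the same set of optimal solutions on the instance $(N,c')$. Combining: an optimal solution to~\eqref{eq:MSTAlmostCoreMaxProblem} on $(N,c')$, minus $M\onevec$, is an optimal solution to~\eqref{eq:MSTAlmostCoreMaxProblem_general} on $(N,c)$, giving the claimed polynomial-time reduction; the $\cplxNP$-hardness of~\eqref{eq:MSTAlmostCoreMaxProblem} then follows from Theorem~\ref{prop_hardness_MSt_AC}.

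The main thing to be careful about — the only real obstacle — is making sure the shift argument interacts correctly with \emph{which} allocations are non-negative: the bijection $x' \mapsto x' - M\onevec$ does \emph{not} send $\almostCore[(N,c')]\cap\Rnonneg^n$ onto $\almostCore[(N,c)]\cap\Rnonneg^n$, so one cannot simply claim the two nonnegative problems are isomorphic. The point is more subtle: I only need that the \emph{maximizers} of $x'(N)$ over $\almostCore[(N,c')]$ happen to be non-negative, which is what the computation above establishes using the lower bound $x'(N)\ge c'(N)$ valid for MST games (non-empty core) together with condition~\eqref{eq_last_monotone}-type reasoning, i.e. monotonicity of $c$ on the "$N\setminus\{k\}$ vs.\ $N$" comparison. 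For the plain (non-monotonized) MST game this monotonicity need not hold for $c$ itself, but it holds after the shift because $c'(N)-c'(N\setminus\{k\}) = M + \bigl(c(N)-c(N\setminus\{k\})\bigr)$ and $M$ dominates the (bounded) difference $c(N)-c(N\setminus\{k\})$ — this is exactly why the uniform shift by a large $M$ is the right construction rather than, say, scaling. One should also note $c(N)>0$ is not needed here (unlike in Theorem~\ref{prop_hardness_MSt_AC}), and that all arithmetic stays polynomial in the input bit-size since $M$ is a polynomially bounded sum of the given weights.
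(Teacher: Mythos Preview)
Your proposal is correct and uses the same construction as the paper: uniformly shift all edge weights by a large constant $M$ so that $c'(S)=c(S)+M|S|$, whence the affine map $x'\mapsto x'-M\onevec$ identifies the two almost cores and shifts the objective by $Mn$. The only difference lies in how redundancy of $x'\ge 0$ at the optimum is verified. The paper bounds coordinates of an optimal allocation $y$ of the \emph{original} instance via a tight-constraint argument (for each $i$ some $S\ni i$ has $y(S)=c(S)$, giving $y_i\ge -\sum_{j}c(\{j\})$), and then takes $M\coloneqq\sum_j c(\{j\})$. You instead bound coordinates of an optimal $x'$ of the \emph{shifted} instance directly, using $x'(N)\ge c'(N)$ (non-empty core of MST games) together with $x'(N\setminus\{k\})\le c'(N\setminus\{k\})$. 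Your route is slightly slicker for MST games; the paper's tight-constraint argument has the advantage of not relying on non-emptiness of the core and hence extends verbatim to arbitrary cost sharing games $(N,c)$ via $c'(S)\coloneqq c(S)+M|S|$, a generalization the paper records in a remark immediately after the proof.
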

%\Matthias{The fact that~\eqref{eq:MSTAlmostCoreMaxProblem_general} can be reduced to~\eqref{eq:MSTAlmostCoreMaxProblem} alone does not imply $\cplxNP$-hardness! I added ``one instance'' both times to make clear that it is a one-to-one reduction. Is that a good fix?}
As a matter of fact, the $\cplxNP$-hardness also follows from the fact that for monotonized MST games, we have that $x\ge 0$ is redundant in \eqref{eq:MSTAlmostCoreMaxProblem},
recalling that  $x_i\ge c(N)-c(N\setminus\{i\})\ge 0$ for all $i\in N$.
\begin{proof}
  Given an instance of \eqref{eq:MSTAlmostCoreMaxProblem_general} with edge costs $w$, define new edge costs $w'(e) \coloneqq w(e)+M$, $e\in E$, for large enough constant $M$ to be defined later. Note that $c'(S) = c(S)+|S|\cdot M$ for $S\subseteq N$.
  We argue that this renders $x \geq 0$ redundant.
  Consider an optimal solution $x'$ to problem \eqref{eq:MSTAlmostCoreMaxProblem} for edge costs $w'$, and define $x \coloneqq x' - (M,\dotsc,M)$.
  Now we have $x(S) = x'(S) - |S| \cdot M \leq c'(S) - |S|\cdot M = c(S)$ for all $S \subsetneqq N$, so $x$ is feasible for problem \eqref{eq:MSTAlmostCoreMaxProblem_general} with edge costs $w$.
  We show that $x$ must be optimal for problem \eqref{eq:MSTAlmostCoreMaxProblem_general}.
  Considering any solution $y$ \emph{optimal} for \eqref{eq:MSTAlmostCoreMaxProblem_general}, there exists a number $M$ that can be computed in polynomial time so that $y' \coloneqq y + (M,\dotsc,M) \geq 0$, and $y'(S) = y(S) + |S| \cdot M \leq c(S) + |S| \cdot M = c'(S)$, so $y'$ is feasible for \eqref{eq:MSTAlmostCoreMaxProblem} with cost function $w'$.
  Hence $y(N) > x(N)$ yields the contradiction  $y'(N) > x'(N)$.
  To argue about $M$, observe that in \eqref{eq:MSTAlmostCoreMaxProblem_general} we {maximize} $x(N)$, hence for any {optimal} solution $y$ in \eqref{eq:MSTAlmostCoreMaxProblem_general}, and any $i\in N$ there exists $S \ni i$  so that  $y(S) = c(S)$,
  hence $y_i=c(S)-\sum_{j\in S, j\ne i}y_j\ge -\sum_{j\in N}c(\{j\})$, where the last inequality holds because $c(S)\ge 0$, $y_j \leq c(\{j\})$ for all $j \in N$, and $c(\{j\})=w(\{0,j\})\ge 0$. In other words, letting $M \coloneqq  \sum_{j \in N} c(\{j\})$ suffices so that $y_i \geq -M$  for all $i \in N$, and hence $y'\ge 0$ as required.
\end{proof}

{\sc Remark.} The above reduction of computing arbitrary allocations to computing non-negative allocations generalizes to all cost sharing games $(N,c)$ by defining $c'(S)\coloneqq(S)+|S|\cdot M$ for all subsets $S\subsetneqq N$.

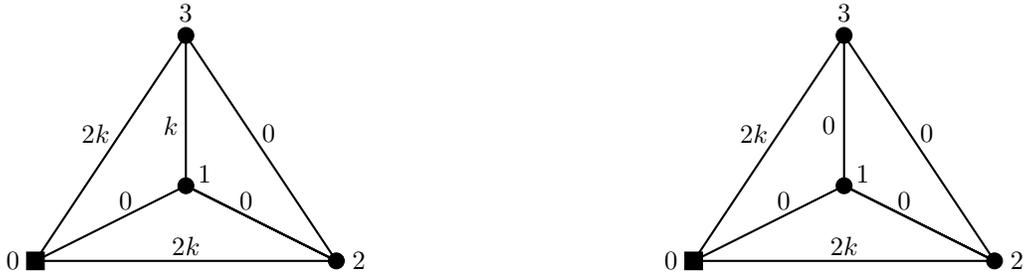
\begin{figure}[ht]
  \subfloat[MST game with unbounded relative gap between $c(N)$ and almost core optimum.\label{fig:large_gap}]{\begin{minipage}{0.47\textwidth}\centering
\begin{tikzpicture}
%% vertices
\node[shape=rectangle,draw=black,fill=black] at (0,0) {};
%\draw[fill=black] (0,0) circle (3pt);
\draw[fill=black] (4,0) circle (3pt);
\draw[fill=black] (2,1) circle (3pt);
\draw[fill=black] (2,3) circle (3pt);
%% vertex labels
\node at (-0.3,0) {0};
\node at (4.3,0) {2};
\node at (2.25,1.15) {1};
\node at (2,3.3) {3};
%% edge labels
\node at (2,0.2) {$2k$};
\node at (0.8,1.7) {$2k$};
\node at (1.2,0.8) {$0$};
\node at (2.8,0.8) {$0$};
\node at (1.8,1.8) {$k$};
\node at (3.1,1.7) {$0$};
%%% edges
\draw[thick] (0,0) -- (4,0) -- (2,1) -- (0,0) -- (2,3) -- (4,0) -- (2,1) -- (2,3);
\end{tikzpicture}
\end{minipage}}
  \hfill
  \subfloat[MST game where subsidies are necessary for the almost core optimum.\label{fig:subsidy_example}]{\begin{minipage}{0.47\textwidth}\centering
  \begin{tikzpicture}
%% vertices
\node[shape=rectangle,draw=black,fill=black] at (0,0) {};
%\draw[fill=black] (0,0) circle (3pt);
\draw[fill=black] (4,0) circle (3pt);
\draw[fill=black] (2,1) circle (3pt);
\draw[fill=black] (2,3) circle (3pt);
%% vertex labels
\node at (-0.3,0) {0};
\node at (4.3,0) {2};
\node at (2.25,1.15) {1};
\node at (2,3.3) {3};
%% edge labels
\node at (2,0.2) {$2k$};
\node at (0.8,1.7) {$2k$};
\node at (1.2,0.8) {$0$};
\node at (2.8,0.8) {$0$};
\node at (1.8,1.8) {$0$};
\node at (3.1,1.7) {$0$};
%%% edges
\draw[thick] (0,0) -- (4,0) -- (2,1) -- (0,0) -- (2,3) -- (4,0) -- (2,1) -- (2,3);
\end{tikzpicture}
  \end{minipage}}
  \caption{Two MST games with $n=3$ players for insights into optimal almost core solutions.}
\end{figure}

\subsection{Two-Approximation Algorithm}
We next propose the following polynomial time algorithm to compute an approximately optimal almost core allocation for problem~\eqref{eq:MSTAlmostCoreMaxProblem}.
For notational convenience, let us define for all $K\subset N$ 
\[
N_{-K} \coloneqq N\setminus K\,,
\]
and write $N_{-i}$ instead of $N_{-\{i\}}$.

\newcommand{\Nkl}{N \setminus \{k,\ell\}}

\begin{algorithm}
  \DontPrintSemicolon
  \SetAlgoLined
  \KwIn{Agents $N$, edge set $E$ of complete graph on $N \cup \{0\}$ and edge weights $w : E \to \Rnonneg$.}
  \KwOut{Almost core allocation $x$.}
  \caption{Approximation algorithm for the almost core maximization problem \eqref{eq:MSTAlmostCoreMaxProblem} for MST games}
  \label{algo_mst_apx}
  Initialize $I_0 \coloneqq \{0\}$ and $T \coloneqq \emptyset$. \;
  \For{$k=1,2,\dotsc,n$}
  {%
    Let $i \in I_{k-1}$, $j \in N \setminus I_{k-1}$ with minimum $w(i,j)$ (among those $i,j$). \;
    Let $I_k \coloneqq I_{k-1} \cup \{j\}$ and augment the tree $T \coloneqq T \cup \{ \{i,j\} \}$. \;
    Assign agent $j$ the cost share $x_j \coloneqq w(i,j)$. \label{algo_mst_apx_assign}
  }
  Let $\ell \in I_n \setminus I_{n-1}$ be the last assigned agent. \label{algo_mst_apx_last_player} \;
  Update agent $\ell$'s cost share $x_\ell \coloneqq \min\limits_{k \in N_{-\ell}} \{ c(N_{-k}) - x(\Nkl) \}$ \label{algo_mst_apx_update}.
\end{algorithm}
\bigskip

The backbone of Algorithm~\ref{algo_mst_apx} is effectively Prim's algorithm to compute a minimum cost spanning tree~\cite{Prim57}.
The additional line~\ref{algo_mst_apx_assign} yields the core allocation by Granot and Huberman~\cite{Granot1981}, which we extend by adding lines~\ref{algo_mst_apx_last_player} and~\ref{algo_mst_apx_update}.

\DeclareDocumentCommand{\xalg}{}{x^{\text{ALG}}}
\DeclareDocumentCommand{\xopt}{}{x^{\text{OPT}}}

Let us first collect some basic properties of Algorithm~\ref{algo_mst_apx}. Henceforth, we assume w.l.o.g.\ that the agents get assigned their cost shares in the order $1,\dots,n$ (so that $\ell=n$ in lines~\ref{algo_mst_apx_last_player} and \ref{algo_mst_apx_update}). We denote by $\xalg$ a solution computed by Algorithm~\ref{algo_mst_apx}. 
%Proofs of the following three lemmas can be found in Appendices~\ref{app:C}--\ref{app:E}.
 
\begin{lemma}\label{lem:up_to_en_minus_one}
We have that $\xalg(I_k)=c(I_k)$ for all $k=1,\dots, n-1$, and for all $S\subseteq\{1,\dots,n-1\}$ we have $\xalg(S)\le c(S)$.
\end{lemma}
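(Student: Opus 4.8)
The statement has two parts. The first part, $\xalg(I_k) = c(I_k)$ for $k = 1,\dots,n-1$, is exactly the classical correctness of the Granot–Huberman allocation: the backbone of Algorithm~\ref{algo_mst_apx} is Prim's algorithm, and after $k$ steps the edges in $T$ form a minimum cost spanning tree on the vertex set $I_k$ (this is the standard invariant of Prim's algorithm applied to the subgraph induced by $I_k$, since at each step the cheapest edge leaving the current set is added). The cost shares $x_j = w(i,j)$ assigned in line~\ref{algo_mst_apx_assign} are precisely the weights of the tree edges, so $\xalg(I_k \setminus \{0\}) = \xalg(I_k) = w(T\restriction I_k) = c(I_k)$. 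I would phrase this as a short induction on $k$, invoking optimality of the Prim tree for the induced subgraph on $I_k \cup \{0\}$; note this also uses that line~\ref{algo_mst_apx_update} only alters $x_\ell = x_n$, which does not affect any $I_k$ with $k \le n-1$.

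The second part, $\xalg(S) \le c(S)$ for all $S \subseteq \{1,\dots,n-1\}$, is the coalitional stability of the Granot–Huberman allocation restricted to the first $n-1$ agents. The cleanest route is to reuse the known fact from~\cite{Granot1981} that the vector $x_v \coloneqq w(e_v)$ — where $e_v$ is the edge connecting $v$ towards the supplier in the Prim tree on all of $N \cup \{0\}$ — lies in the core of the monotonized MST game $(N,\bar c)$, hence satisfies $x(S) \le \bar c(S) \le c(S)$ for every $S \subsetneqq N$. Since for $k \le n-1$ the cost share $\xalg_k$ produced by the algorithm equals exactly $w(e_k)$ for this Prim tree (the last agent $n$ is the only one whose share is later overwritten), we get $\xalg(S) \le c(S)$ for every $S \subseteq \{1,\dots,n-1\}$ directly. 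If one prefers a self-contained argument, one can observe that the edges $\{e_v : v \in S\}$ together with $\{0\}$ contain a spanning connected subgraph of $S \cup \{0\}$ (each $v \in S$ is joined towards $0$), so $\sum_{v \in S} w(e_v)$ is the weight of a spanning subgraph on a superset of $S \cup \{0\}$ restricted appropriately; making this precise is essentially the Granot–Huberman argument, so I would just cite it.

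The main obstacle is purely expository rather than mathematical: one must be careful that the quantity $\xalg_k$ appearing in the lemma is the value assigned in line~\ref{algo_mst_apx_assign} and not the possibly-updated value, which is why the lemma restricts to $k \le n-1$ and to $S \subseteq \{1,\dots,n-1\}$; the update in line~\ref{algo_mst_apx_update} touches only agent $n$. Once this bookkeeping is stated, both claims follow from the established properties of Prim's algorithm and the Granot–Huberman core allocation, with no genuinely new estimate required. I would therefore keep the proof to a few lines: induction/invariant for the equalities, and a citation to~\cite{Granot1981} (or the two-line spanning-subgraph argument) for the inequalities.
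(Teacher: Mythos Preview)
Your proposal is correct and follows essentially the same approach as the paper: invoke the standard invariant of Prim's algorithm for the equalities $\xalg(I_k)=c(I_k)$, and cite the Granot--Huberman result~\cite{Granot1981} for the coalitional stability inequalities, noting that line~\ref{algo_mst_apx_update} only alters $x_n$. The paper's proof is just the terse version of exactly this argument.
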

\begin{proof}
The first claim follows directly because Algorithm~\ref{algo_mst_apx} equals Prim's algorithm to compute a minimum cost spanning tree on the vertex set $\{0,\dots,n-1\}$, 
and $\xalg(I_k)$ equals the cost of the minimum cost spanning tree on vertex set $\{1,\dots,k\}$, Hence by correctness of Prim's algorithm~\cite{Prim57}, $\xalg(I_k)=c(I_k)$. The second claim follows by~\cite[Thm.~3]{Granot1981},
since the cost allocation for agents $\{1,\dots,n-1\}$ is the same as in~\cite{Granot1981}.
\end{proof}

\begin{lemma}
\label{lem:core_lemma_feasibility}
Suppose $\xalg(S) > c(S)$ for some set $S$ with $n\in S \subsetneqq N$.
Then there is a superset $T \supseteq S$ with $|T| = n-1$ such that $\xalg(T) > c(T)$.
%and $\xalg(N) > c(N)$ holds.
\end{lemma}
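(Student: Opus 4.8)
The plan is to enlarge $S$ one vertex at a time, preserving the strict violation, until the set has exactly $n-1$ elements; formally this is an induction on $n-|S|$. If $|S|=n-1$ there is nothing to do: take $T\coloneqq S$. Otherwise $|S|\le n-2$ (the case $|S|=n$ is excluded since $S\subsetneqq N$), and it suffices to exhibit a single vertex $j^\star\in N\setminus S$ with $\xalg(S\cup\{j^\star\})>c(S\cup\{j^\star\})$: then $n\in S\cup\{j^\star\}$ and $|S\cup\{j^\star\}|\le n-1$, so $S\cup\{j^\star\}$ is again a proper subset of $N$ containing $n$, and we recurse on it.

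The heart of the argument is the choice of $j^\star$ together with a bound on the marginal minimum-spanning-tree cost of adding it to $S\cup\{0\}$. By the w.l.o.g.\ ordering, agent $n=\ell$ is the \emph{last} vertex inserted by the Prim process underlying Algorithm~\ref{algo_mst_apx}; since $n\in S$, every vertex of $N\setminus S$ is inserted strictly before $n$ and hence lies in $\{1,\dots,n-1\}$. I would take $j^\star$ to be the \emph{earliest} inserted vertex among $\{1,\dots,n-1\}\setminus S$, which is nonempty because $|S|\le n-2$. Let $I(j^\star)$ denote the set of vertices already present in the tree at the moment $j^\star$ is inserted. Then $I(j^\star)\subseteq S\cup\{0\}$: it contains $0$; it does not contain $n$ (inserted later); and any agent of $\{1,\dots,n-1\}$ inserted before $j^\star$ must lie in $S$ by minimality of $j^\star$. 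By the algorithm's selection rule, the cost share assigned to $j^\star$ in line~\ref{algo_mst_apx_assign} equals $\min_{i\in I(j^\star)}w(i,j^\star)$, and it is never altered afterwards, since line~\ref{algo_mst_apx_update} changes only agent $n\ne j^\star$. Hence
\[
  \xalg_{j^\star} \;=\; \min_{i\in I(j^\star)} w(i,j^\star) \;\ge\; \min_{i\in S\cup\{0\}} w(i,j^\star).
\]

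I would then combine this with the elementary bound $c(S\cup\{j^\star\})\le c(S)+\min_{i\in S\cup\{0\}}w(i,j^\star)$, valid because attaching $j^\star$ to a minimum spanning tree of the subgraph induced by $S\cup\{0\}$ via its cheapest incident edge produces a spanning tree of the subgraph induced by $S\cup\{0\}\cup\{j^\star\}$. Using the hypothesis $\xalg(S)>c(S)$, this gives
\[
  c(S\cup\{j^\star\}) \;\le\; c(S)+\xalg_{j^\star} \;<\; \xalg(S)+\xalg_{j^\star} \;=\; \xalg(S\cup\{j^\star\}),
\]
which is exactly the violation needed to continue; after at most $n-1-|S|$ such extensions the set reaches cardinality $n-1$ and the induction stops.

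The one place where the hypotheses are genuinely used --- and the step I expect to require the most care --- is the identity $\xalg_{j^\star}=\min_{i\in I(j^\star)}w(i,j^\star)$ together with the inclusion $I(j^\star)\subseteq S\cup\{0\}$, both of which rely on the w.l.o.g.\ ordering that makes $n$ the last inserted agent and on a faithful reading of Prim's selection rule in Algorithm~\ref{algo_mst_apx}. The remaining ingredients --- the trivial MST marginal-cost bound and the counting that keeps the induction alive until $|T|=n-1$ with $T\subsetneqq N$ --- are routine.
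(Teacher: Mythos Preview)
Your argument is correct and is essentially the paper's own proof. The paper likewise adds, one at a time and in increasing insertion order, the vertices of $N\setminus S$ other than the largest-indexed one $k=\max\{i:i\notin S\}$, using at each step that the Prim edge attaching the newly added vertex has its other endpoint among the already-present vertices, hence in $S\cup\{0\}$; the only cosmetic difference is that the paper fixes the target $T=N_{-k}$ at the outset, whereas you let the induction discover it.
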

\begin{proof}
Recall the agents got assigned their cost shares in order $1,\dots,n$. Define $k\coloneqq\max\{i\ |\ i\notin S\}$ to be the largest index of a agent not in $S$. Let $i_1,\dots,i_\ell$ be the set of agents so that $N_{-k}=S\cup\{i_1,\dots,i_\ell\}$ and  w.l.o.g.\ $i_1<\cdots < i_\ell$. 
We show that $\xalg(S) > c(S)$ implies $\xalg(S\cup\{i_1\}) > c(S\cup\{i_1\})$. Then repeating the same argument, we inductively arrive at the conclusion that $\xalg(N_{-k})>c(N_{-k})$. So observe that
\[
\xalg(S\cup\{i_1\})=\xalg(S)+x_{i_1} > c(S) +x_{i_1}\,,
\]
and $c(S)$ is the cost of a minimum cost spanning tree for $S$, call it $\textup{MST}(S)$. Moreover, as $i_1\neq n$, $x_{i_1}$ is the cost of the edge, call it $e$, that the algorithm used to connect agent $i_1$. We claim that $\textup{MST}(S)\cup \{e\}$ is a tree spanning
vertices $S\cup\{0,i_1\}$, hence $c(S) +x_{i_1}$ is the cost of some tree spanning 
$S\cup\{0,i_1\}$. Then, as required we get
\[
\xalg(S\cup\{i_1\}) > c(S) +x_{i_1} \ge c(S\cup\{i_1\})\,,
\]
because $c(S\cup\{i_1\})$ is the cost of a \emph{minimum cost} tree spanning $S\cup\{0,i_1\}$.
%\Boyue{If S is a player set, should it be tree spanning $S\cup\{0,i_1\}$?}  
If $\textup{MST}(S)\cup \{e\}$ was not a spanning tree for vertices $S\cup\{0,i_1\}$,
then edge $e$ would connect $i_1$ to some vertex outside $S\cup\{0\}$, 
but this contradicts the choice of $i_1$ as the vertex outside $S$ with smallest index.
\end{proof}

\begin{lemma}
\label{lem:nonnegativity_xalg}
%Let $\xalg$ be a solution computed by \cref{algo_mst_apx}.
We have $\xalg \geq 0$.
\end{lemma}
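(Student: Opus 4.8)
The plan is to prove $\xalg \ge 0$ one coordinate at a time. Having fixed the assignment order so that $\ell = n$, each agent $j \in \{1,\dots,n-1\}$ receives in line~\ref{algo_mst_apx_assign} the share $x_j = w(e_j)$, the weight of an edge of the input graph, which is non-negative by the assumption on $w$. So the whole lemma reduces to showing $x_n \ge 0$, where after line~\ref{algo_mst_apx_update} we have $x_n = \min_{k \in N_{-n}}\{\,c(N_{-k}) - \xalg(N\setminus\{k,n\})\,\}$; equivalently, I must show $\xalg(N\setminus\{k,n\}) \le c(N_{-k})$ for every $k \in \{1,\dots,n-1\}$.

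Write $p_j \coloneqq w(e_j)$ for the weights picked by the Prim backbone. The first step is to rewrite both sides combinatorially. Since the first $n-1$ iterations of Algorithm~\ref{algo_mst_apx} are exactly Prim's algorithm on the subgraph induced by $\{0,1,\dots,n-1\}$, Lemma~\ref{lem:up_to_en_minus_one} (with $k=n-1$) gives $\sum_{j=1}^{n-1} p_j = c(N_{-n})$, while all $n$ iterations produce a minimum spanning tree $T$ of the graph induced by $N\cup\{0\}$, so $\sum_{j=1}^{n} p_j = c(N)$. Hence $\xalg(N\setminus\{k,n\}) = \sum_{j\in\{1,\dots,n-1\}\setminus\{k\}} p_j = c(N_{-n}) - p_k$, and the inequality to prove becomes simply $c(N_{-n}) \le c(N_{-k}) + p_k$.

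The crux is this last inequality, and the obvious route fails: for MST games $c$ is \emph{not} monotone, so $c(N\setminus\{k,n\}) \le c(N_{-k})$ need not hold and the bound $\xalg(N\setminus\{k,n\}) \le c(N\setminus\{k,n\})$ from Lemma~\ref{lem:up_to_en_minus_one} is of no help. The fix is to pass through $c(N)$. From the two identities above, $c(N_{-n}) = c(N) - p_n \le c(N)$ since $p_n \ge 0$. On the other hand, let $T_k$ be a minimum spanning tree of the graph induced by $N_{-k}\cup\{0\}$, of cost $c(N_{-k})$; the Prim edge $e_k$ joins $k$ to a vertex of $I_{k-1} = \{0,1,\dots,k-1\} \subseteq N_{-k}\cup\{0\}$, so $T_k \cup \{e_k\}$ is a connected spanning subgraph of the complete graph on $N\cup\{0\}$ of cost $c(N_{-k}) + p_k$, whence $c(N) \le c(N_{-k}) + p_k$. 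Chaining the two bounds yields $c(N_{-n}) \le c(N) \le c(N_{-k}) + p_k$ for all $k \in \{1,\dots,n-1\}$, so every term in the minimum defining $x_n$ is non-negative; therefore $x_n \ge 0$ and $\xalg \ge 0$.
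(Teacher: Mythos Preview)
Your proof is correct. Both you and the paper reduce the claim to showing that the pre-update Prim value $p_n = c(N) - c(N_{-n})$ is a feasible choice for $x_n$ in line~\ref{algo_mst_apx_update}, i.e., that $c(N) - c(N_{-n}) \le c(N_{-k}) - \xalg(N\setminus\{k,n\})$ for every $k \in N_{-n}$, which after the substitution $\xalg(N\setminus\{k,n\}) = c(N_{-n}) - p_k$ is exactly your inequality $c(N) \le c(N_{-k}) + p_k$.

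The difference lies in how this last inequality is justified. The paper invokes the Granot--Huberman result that the full Bird allocation $\tilde{x}$ lies in the core, and then reads off the particular constraints $\tilde{x}(N_{-k}) \le c(N_{-k})$. You instead give a direct, self-contained argument: take an MST $T_k$ on $N_{-k}\cup\{0\}$, attach $k$ via the Prim edge $e_k$ (whose other endpoint lies in $I_{k-1} \subseteq N_{-k}\cup\{0\}$), and obtain a spanning tree of $N\cup\{0\}$ of cost $c(N_{-k}) + p_k \ge c(N)$. Your route is more elementary in that it avoids quoting the full core-membership theorem and proves only the $n-1$ specific inequalities actually needed; the paper's route is shorter if one is willing to cite~\cite{Granot1981} as a black box.
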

\begin{proof}
Recall that in minimum cost spanning tree games~\cite{ClausK1973,Granot1981}, the weight of edges are non-negative. 
Since Algorithm~\ref{algo_mst_apx} computes the allocation for agents in line~\ref{algo_mst_apx_assign} by the edge weight of the first edge on the unique path to $0$, 
%incident with player in minimum spanning tree, 
there is $\xalg_{k}
%=x_{I_{k}\setminus I_{k-1}}
\geq 0$ for all $k=1,2,\cdots,n-1$. So we only need to argue about $\xalg_n$. To that end, note that an equivalent definition of $\xalg_n$ in line~\ref{algo_mst_apx_update} of the algorithm is
\begin{equation}
\label{eq:helper_in_nonnegativity}
\text{max.}\ x_n\ \text{s.t.}\ x_n \le c(N_{-k})-\xalg(N\setminus\{k,n\})\ \text{for all}\ k=1,\dots,n-1\,. 
 \end{equation}
We claim that $\tilde{x}_n\coloneqq c(N)-c(N_{-n})\ge 0$
is a feasible solution to this maximization problem, hence the actual value of $\xalg_n$ after the update in 
line~\ref{algo_mst_apx_update} 
can only be larger, and therefore in particular it is non-negative.
First, note that indeed, $\tilde{x}_n \ge 0$, as this is the cost of the last edge that Prim's algorithm uses to connect the final vertex $n$ to the minimum cost spanning tree.
That $\tilde{x}_n$ is feasible in \eqref{eq:helper_in_nonnegativity} follows from the fact that $\tilde{x}_n$ is the cost share that is assigned to agent $n$ in the core allocation of~\cite{Granot1981}.
Indeed, letting $\tilde{x}$ be equal to $x$ except for $\tilde{x}_n=c(N)-c(N_{-n})$, we have that $\tilde{x}$ is precisely the cost allocation as proposed in~\cite{Granot1981}. 
By the fact that this yields a core allocation, we have that $\tilde{x}(S)\le c(S)$ for all $S\subseteq N$, so in particular for all $k=1,\dots, n-1$,
\[
\tilde{x}_n + \xalg(N\setminus \{k,n\})=\tilde{x}(N_{-k})\le c(N_{-k})\,,
\]
and hence the claim follows.
\end{proof}

\begin{theorem}
\label{thm:2_approximation}
  Algorithm~\ref{algo_mst_apx} is a 2-approximation for the almost core maximization problem \eqref{eq:MSTAlmostCoreMaxProblem} for minimum cost spanning tree games, and this performance bound  is tight for Algorithm~\ref{algo_mst_apx}.
\end{theorem}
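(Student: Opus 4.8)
The plan is to establish the theorem in three steps: (i) that the vector $\xalg$ produced by Algorithm~\ref{algo_mst_apx} is feasible for \eqref{eq:MSTAlmostCoreMaxProblem}; (ii) that $\xalg(N)$ is at least half the optimal value of \eqref{eq:MSTAlmostCoreMaxProblem}; and (iii) that a concrete instance forces the ratio to be exactly $2$. Step (i) is the conjunction of the three preceding lemmas: nonnegativity is Lemma~\ref{lem:nonnegativity_xalg}; for a set $S\subsetneqq N$ with $n\notin S$ we have $S\subseteq\{1,\dots,n-1\}$ and $\xalg(S)\le c(S)$ by Lemma~\ref{lem:up_to_en_minus_one}; and for $S\subsetneqq N$ with $n\in S$, if $\xalg(S)>c(S)$ then Lemma~\ref{lem:core_lemma_feasibility} yields $T\supseteq S$ with $|T|=n-1$ and $\xalg(T)>c(T)$, so $T=N_{-k}$ for some $k\in\{1,\dots,n-1\}$, which contradicts the fact that line~\ref{algo_mst_apx_update} guarantees $x_n\le c(N_{-k})-\xalg(N\setminus\{k,n\})$ and hence $\xalg(N_{-k})\le c(N_{-k})$.

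For step (ii), the crucial observation is an identity for the value $\xalg(N)$. Using $\xalg(N_{-n})=c(N_{-n})$ (Lemma~\ref{lem:up_to_en_minus_one}) and $\xalg(N\setminus\{k,n\})=\xalg(N_{-n})-\xalg_k$, unfolding line~\ref{algo_mst_apx_update} gives
\[
  \xalg(N)\;=\;\xalg(N_{-n})+x_n\;=\;\min_{k\in\{1,\dots,n-1\}}\bigl\{\,c(N_{-k})+\xalg_k\,\bigr\}\,.
\]
Since the Prim edge weights $\xalg_1,\dots,\xalg_{n-1}$ are nonnegative, this gives both $\xalg(N)\ge\min_{k<n}c(N_{-k})$ and $\xalg(N)=c(N_{-n})+x_n\ge c(N_{-n})$, the latter also using $x_n\ge 0$ (Lemma~\ref{lem:nonnegativity_xalg}). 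Now let $\xopt$ be optimal for \eqref{eq:MSTAlmostCoreMaxProblem}, so $\xopt\ge 0$ and $\xopt(S)\le c(S)$ for all $S\subsetneqq N$. Split off agent $n$: $\xopt(N)=\xopt(N_{-n})+\xopt_n$. The first term is $\le c(N_{-n})\le\xalg(N)$. For the second, for every $k<n$ nonnegativity of $\xopt$ yields $\xopt_n\le\xopt(N_{-k})\le c(N_{-k})$, whence $\xopt_n\le\min_{k<n}c(N_{-k})\le\xalg(N)$. Adding the two estimates gives $\xopt(N)\le 2\,\xalg(N)$, and feasibility of $\xalg$ gives $\xalg(N)\le\xopt(N)$; thus Algorithm~\ref{algo_mst_apx}, which runs in polynomial time (Prim's algorithm plus $n-1$ MST computations for line~\ref{algo_mst_apx_update}), is a $2$-approximation.

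For step (iii), I would exhibit one MST game with $n=3$: the complete graph on $\{0,1,2,3\}$ with $w(0,1)=1-\delta$, $w(0,2)=w(0,3)=1$, $w(1,2)=w(1,3)=0$ and $w(2,3)=1$, where $0<\delta<1$ only breaks the tie among the edges at the supplier node. Then Prim's algorithm takes $\{0,1\}$ first, so $\xalg_1=1-\delta$, $\xalg_2=\xalg_3=0$; since $c(N_{-2})=c(\{1,3\})=1-\delta$ and $c(N_{-1})=c(\{2,3\})=2$, line~\ref{algo_mst_apx_update} leaves $x_3=\min\{2-0,\ (1-\delta)-(1-\delta)\}=0$, so $\xalg(N)=1-\delta$. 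On the other hand $x^\star\coloneqq(0,1-\delta,1-\delta)$ is feasible: the only tight constraints are $x^\star(\{1,2\})=x^\star(\{1,3\})=1-\delta=c(\{1,2\})=c(\{1,3\})$, while $x^\star(\{2,3\})=2-2\delta<2=c(\{2,3\})$ and the singleton constraints are slack. Since $x^\star(N)=2(1-\delta)=2\,\xalg(N)$, the algorithm attains the ratio $2$ on this instance.

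The step I expect to cost the most effort is the identity $\xalg(N)=\min_{k<n}\{c(N_{-k})+\xalg_k\}$ together with its companion bound $\xopt_n\le\min_{k<n}c(N_{-k})$; the latter hinges on nonnegativity of $\xopt$, which lets one discard the remaining agents of $N_{-k}$. Designing the tight instance is also somewhat delicate: one has to force Prim into a wasteful first edge while keeping $c(N)$ and $c(N_{-n})$ small and $c(\{2,3\})$ large enough that the optimum can load mass on agents $2$ and $3$.
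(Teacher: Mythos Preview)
Your proof is correct and follows essentially the same route as the paper: feasibility via Lemmas~\ref{lem:up_to_en_minus_one}--\ref{lem:nonnegativity_xalg}, and the approximation bound by splitting $\xopt(N)=\xopt(N_{-n})+\xopt_n$ and bounding each summand by $\xalg(N)$ using nonnegativity. Your identity $\xalg(N)=\min_{k<n}\{c(N_{-k})+\xalg_k\}$ is a clean repackaging of the paper's observation that $\xalg(N_{-k^\star})=c(N_{-k^\star})$ at the minimizer $k^\star$; your tightness instance is different from the paper's (Figure~\ref{fig:the_tight_bound_2}) but equally valid, and even hits the ratio $2$ exactly rather than in a limit.
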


\begin{proof}
Denote by $\xalg$ a solution by Algorithm~\ref{algo_mst_apx}.
We first argue that Algorithm~\ref{algo_mst_apx} yields a feasible solution.
For $S\not\ni n$, this follows from \cref{lem:up_to_en_minus_one}.
For $S\ni n$, assume  $x(S)> c(S)$.
Then  \cref{lem:core_lemma_feasibility} yields that there exists some $N_{-k}\ni n$ with $\xalg(N_{-k})> c(N_{-k})$.
However by definition of $x_n$ in line~\ref{algo_mst_apx_update} of the algorithm, we have for all $k=1,\dots,n-1$
\[
\xalg_n \le c(N_{-k}) - \xalg(N \setminus \{k,n\})\,,
\]
which yields a contradiction to $\xalg(N_{-k})> c(N_{-k})$.

To show that the performance guarantee is indeed 2, let $\xopt$ be some optimal solution to the almost core maximization problem.
Let $k^\star \in N_{-n}$ be the index for which the minimum in line~\ref{algo_mst_apx_update} is attained.
Observe that $\xalg_n$ is updated such that $\xalg(N_{-k^\star}) = c(N_{-k^\star})$ holds.
Then by non-negativity of $\xopt$ and because of \cref{lem:nonnegativity_xalg},
\[
\xopt_n  \le \xopt(N_{-k^\star}) \le c(N_{-k^\star}) = \xalg(N_{-k^\star}) \  \le \xalg(N)\,.
\]
Moreover, by definition of $\xalg$, we have  $\xalg(N_{-n}) = c(N_{-n})$, and by \cref{lem:nonnegativity_xalg}, 
\[
\xopt(N_{-n})\le c(N_{-n}) = \xalg(N_{-n}) \le \xalg(N)\,.
\]
Hence we get $\xopt(N) = \xopt_n + \xopt(N_{-n})\le 2 \xalg(N)$.

To see that the performance bound 2 is tight for Algorithm~\ref{algo_mst_apx}, consider the instance in Figure~\ref{fig:the_tight_bound_2}.
\begin{figure}[thb]
\subfloat[MST game showing that the analysis of Algorithm~\ref{algo_mst_apx} cannot be improved.
  \label{fig:the_tight_bound_2}]
{\begin{minipage}{0.47\textwidth}\centering
  \centering
 \begin{tikzpicture}
%% vertices
\node[shape=rectangle,draw=black,fill=black] at (0,0) {};
%\draw[fill=black] (0,0) circle (3pt);
\draw[fill=black] (4,0) circle (3pt);
\draw[fill=black] (2,1) circle (3pt);
\draw[fill=black] (2,3) circle (3pt);
%% vertex labels
\node at (-0.3,0) {0};
\node at (4.3,0) {2};
\node at (2.25,1.15) {1};
\node at (2,3.3) {3};
%% edge labels
\node at (2,0.2) {$2$};
\node at (0.8,1.7) {$2$};
\node at (1.2,0.8) {$1  $};
\node at (2.8,0.8) {$0$};
\node at (1.8,1.8) {$\varepsilon$};
\node at (3.1,1.7) {$0$};
%%% edges
\draw[thick] (0,0) -- (4,0) -- (2,1) -- (0,0) -- (2,3) -- (4,0) -- (2,1) -- (2,3);
\end{tikzpicture}
\end{minipage}
}
  \hfill
  \subfloat[MST game showing that  Algorithm~\ref{algo_mst_apx} need not compute an element of $AC_{(N,\bar{c})}$.
  \label{fig:monotonized_game_example}]
{\begin{minipage}{0.47\textwidth}\centering
  \centering
  \begin{tikzpicture}
%% vertices
\node[shape=rectangle,draw=black,fill=black] at (0,0) {};
%\draw[fill=black] (0,0) circle (3pt);
\draw[fill=black] (4,0) circle (3pt);
\draw[fill=black] (2,1) circle (3pt);
\draw[fill=black] (2,3) circle (3pt);
%% vertex labels
\node at (-0.3,0) {0};
\node at (4.3,0) {2};
\node at (2.25,1.15) {1};
\node at (2,3.3) {3};
%% edge labels
\node at (2,0.2) {$1$};
\node at (0.8,1.7) {$1$};
\node at (1.2,0.8) {$1$};
\node at (2.8,0.8) {$0$};
\node at (1.8,1.8) {$0$};
\node at (3.1,1.7) {$1$};
%%% edges
\draw[thick] (0,0) -- (4,0) -- (2,1) -- (0,0) -- (2,3) -- (4,0) -- (2,1) -- (2,3);
\end{tikzpicture}
\end{minipage}
}
\caption{Two MST games with $n=3$ players for the analysis of Algorithm~\ref{algo_mst_apx}.}
\end{figure}
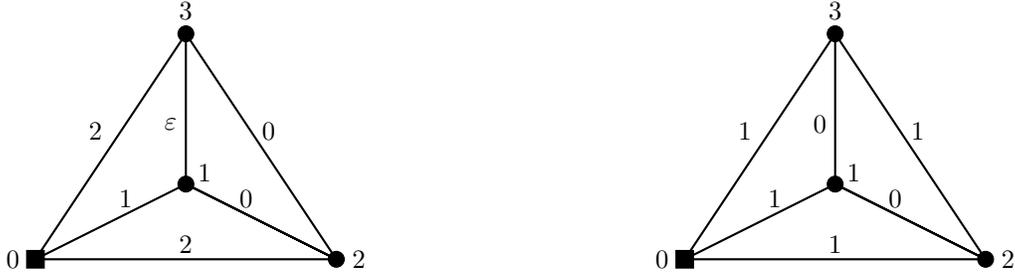
Here, Algorithm~\ref{algo_mst_apx} computes the solution $\xalg=(1,0,\varepsilon)$ with value $1+\varepsilon$, as the order in which agents get assigned their cost shares is $1,2,3$, and in line~\ref{algo_mst_apx_update} of the algorithm we get $\xalg_3= c(\{1,3\})-x_1=(1+\varepsilon)-1=\varepsilon$.
An almost core optimum solution would be $\xopt = (0,1,1)$ with value $2$.
\end{proof}

Even though Theorem~\ref{thm:hardness_MSt_AC_nonneg} suggests that the non-negativity requirement $x \geq 0$ is irrelevant for optimization, it is important for \cref{thm:2_approximation}.
Without it, so allowing $x_i<0$ for some agents $i$, the above algorithm does not provide an approximation guarantee in general.
To see that, consider again the instance given in Figure~\ref{fig:subsidy_example}, and observe that Algorithm~\ref{algo_mst_apx}  yields a cost allocation $\xalg=(0,0,0)$, while $x=(-k,k,k)$ is a feasible solution for the almost core.
%Note that \red{for this instance}, the corresponding monotonized MST game is trivial as \red{$\bar{c}(S)=0$} for all agent sets $S$, and the almost core optimum is $0$ with an optimal allocation $(0,0,0)$. 

It remains to remark that Algorithm~\ref{algo_mst_apx} does generally \emph{not} compute an allocation in the almost core of the corresponding monotonized game $(N,\bar{c})$, as can be seen {for} the instance in Figure~\ref{fig:monotonized_game_example}.
Here, we have that $c(S)=1$ for all $S\subseteq N$ except for $c(\{2,3\})=2$.
An optimal almost core allocation is $x = (0,1,1)$, and depending on how ties are broken, Algorithm~\ref{algo_mst_apx} yields $\xalg = (0,1,1)$ or $\xalg = (1,0,0)$.
The monotonized game has $\bar{c}(S)=1$ for all $S\subseteq N$, and then an optimal almost core allocation is $\bar{x}=(\frac12,\frac12,\frac12)$.
Note that this example also shows that Proposition~\ref{thm_last_monotone_approximation} is tight (for $n=3$), as $c(N)=1$.

\section{Conclusions}
\label{sec_conclusions}
In the literature, one also finds minimum cost spanning tree games defined as \emph{profit sharing} games, where one defines the value of a coalition $S$ by the cost savings that it can realize in comparison to the
situation where all agents in $S$ connect directly to the source,
\[
v(S)\coloneqq\sum_{i\in S}c(\{i\})-c(S)\,.
\]
Then the core constraints, for profit shares $x^v\in\R^n$,  are $x^v(S) \ge v(S)$.
It is not hard to see that all our results also hold for that version of the problem via the simple transformation $x^v_i\coloneqq c(\{i\})-x_i$.
In particular, note that for value games all feasible solutions $x^v$ are non-negative, as core stability requires that $x^v_i \geq v(\{i\}) \geq 0$.
Our results imply $\cplxNP$-hardness for computation of minimum profit shares that are coalitionally stable, and the corresponding profit  version of Algorithm~\ref{algo_mst_apx} can be shown to yield a 2-approximation, which also can be shown to be tight.

We collect some open problems which we believe are interesting.
First, we would like to gain more insight into the computational complexity for the almost core problem \eqref{eq:main_AC_problem}, also for other classes of games.
In particular, note that the computation of an optimal almost core allocation for submodular cost functions as of Corollary~\ref{cor:submodular_ac_problem} relies on the equivalence of separation and optimization. Since the computation of a core element can be done by Edmonds' greedy algorithm, it is conceivable that dropping just one single inequality from that polymatroid still allows for a  combinatorial algorithm, without the need to resort to the Ellipsoid method.
Moreover, we gave a 2-approximation for cost MST games under the additional assumption that subsidies are not allowed. 
It would be interesting to extend this result to the general, unconstrained case, or show that this is not possible.
Also giving lower bounds on the approximability does seem plausible, as the ``hard cases'' for maximizing shareable costs are those where a minimum cost spanning tree exists which is a (Hamiltonian) path.
%\Rong{where there exists one minimum cost spanning tree that is a (Hamiltonian) path?}
%\Matthias{I agree with Rong's comment -- we should at least write ``where \textbf{a} minimum cost spanning tree is a (Hamiltonian) path''.}

Finally, both in general and for MST games one could define a more general class of problems in the spirit of cooperative games with restricted coalition formation, 
by defining  a (downward-closed) set system that describes all those subsets of agents that are able to cooperate and hence have access to an outside option, while all other subsets do not have that option.
This is the same idea as that of restricted coalition formation by Myerson~\cite{Myerson1977} or Chalkiadakis et al.~\cite{Chalkiadakis2016}.
The almost core as studied in this paper is the special case where this set system is particularly simple, namely the $(n-1)$-uniform matroid.

\section*{Acknowledgements}
Rong Zou and Boyue Lin acknowledge the support of the China Scholarship Council (Grants No.\ 202006290073, 202106290010). The authors also thank the anonymous reviewers of an earlier draft of this paper for some constructive comments.

%%%%%%%%%%%%%%%%%%%%%%%%%%%%%%%%%%%%%%%%%%%%%%%%%%%%%%%%%%%%%%%%%%%%%%%%

%%% The next two lines define, first, the bibliography style to be 
%%% applied, and, second, the bibliography file to be used.

\bibliographystyle{acm} 
\bibliography{almost-core}

\begin{thebibliography}{10}

\bibitem{Aziz2010}
{\sc Aziz, H., Brandt, F., and Harrenstein, P.}
\newblock Monotone cooperative games and their threshold versions.
\newblock In {\em 10th Int. Conf. Autonomous Agents and Multiagent Systems\/}
  (2010), IFAAMAS, pp.~1017--1024.

\bibitem{Bachrach2018}
{\sc Bachrach, Y., Elkind, E., Malizia, E., Meir, R., Pasechnik, D.,
  Rosenschein, J.~S., Rothe, J., and Zuckerman, M.}
\newblock Bounds on the cost of stabilizing a cooperative game.
\newblock {\em Journal of Artificial Intelligence Research 63\/} (2018),
  987--1023.

\bibitem{Bachrach2009}
{\sc Bachrach, Y., Elkind, E., Meir, R., Pasechnik, D., Zuckerman, M., Rothe,
  J., and Rosenschein, J.~S.}
\newblock The cost of stability in coalitional games.
\newblock In {\em International Symposium on Algorithmic Game Theory\/} (2009),
  Springer, pp.~122--134.

\bibitem{Bejan2009}
{\sc Bejan, C., and G{\'o}mez, J.~C.}
\newblock Core extensions for non-balanced {TU}-games.
\newblock {\em International Journal of Game Theory 38}, 1 (2009), 3--16.

\bibitem{Bird1976}
{\sc Bird, C.~G.}
\newblock On cost allocation for a spanning tree: a game theoretic approach.
\newblock {\em Networks 6}, 4 (1976), 335--350.

\bibitem{Blaser2008}
{\sc Bl{\"a}ser, M., and Shankar~Ram, L.}
\newblock Approximately fair cost allocation in metric traveling salesman
  games.
\newblock {\em Theory of Computing Systems 43}, 1 (2008), 19--37.

\bibitem{Bondareva1963}
{\sc Bondareva, O.~N.}
\newblock Several applications of linear programming methods to the theory of
  cooperative games.
\newblock {\em Problemi Kibernetiki 10\/} (1963), 119--139.

\bibitem{Bousquet2015}
{\sc Bousquet, N., Li, Z., and Vetta, A.}
\newblock Coalition games on interaction graphs: a horticultural perspective.
\newblock In {\em Proceedings of the Sixteenth ACM Conference on Economics and
  Computation\/} (2015), pp.~95--112.

\bibitem{Caprara2010}
{\sc Caprara, A., and Letchford, A.~N.}
\newblock New techniques for cost sharing in combinatorial optimization games.
\newblock {\em Mathematical Programming 124}, 1 (2010), 93--118.

\bibitem{Chalkiadakis2016}
{\sc Chalkiadakis, G., Greco, G., and Markakis, E.}
\newblock Characteristic function games with restricted agent interactions:
  Core-stability and coalition structures.
\newblock {\em Artificial Intelligence 232\/} (2016), 76--113.

\bibitem{ClausK1973}
{\sc Claus, A., and Kleitman, D.}
\newblock Cost-allocation for a spanning tree.
\newblock {\em Networks 3\/} (1973), 289--304.

\bibitem{Deng1998}
{\sc Deng, X.}
\newblock Combinatorial optimization and coalition games.
\newblock In {\em Handbook of Combinatorial Optimization}, D.-Z. Du and
  P.~Pardalos, Eds., vol.~2. Kluwer Academic Publishers, Dordrecht, 1959,
  pp.~77--103.

\bibitem{deng1994}
{\sc Deng, X., and Papadimitriou, C.~H.}
\newblock On the complexity of cooperative solution concepts.
\newblock {\em Mathematics of Operations Research 19}, 2 (1994), 257--266.

\bibitem{DrechselK10}
{\sc Drechsel, J., and Kimms, A.}
\newblock The subcoalition-perfect core of cooperative games.
\newblock {\em Annals of Operations Research 181}, 1 (2010), 591--601.

\bibitem{Edmonds1970}
{\sc Edmonds, J.}
\newblock Submodular functions, matroids, and certain polyhedra.
\newblock In {\em Combinatorial Structures and Their Applications\/} (New York,
  1970), R.~Guy, Ed., Gordon and Breach, pp.~69--87.

\bibitem{Faigle1998}
{\sc Faigle, U., Fekete, S.~P., Hochst{\"a}ttler, W., and Kern, W.}
\newblock On approximately fair cost allocation in {E}uclidean {TSP} games.
\newblock {\em OR Spektrum 20}, 1 (1998), 29--37.

\bibitem{Faigle1993}
{\sc Faigle, U., and Kern, W.}
\newblock On some approximately balanced combinatorial cooperative games.
\newblock {\em Zeitschrift f{\"u}r Operations Research 38}, 2 (1993), 141--152.

\bibitem{Faigle1997}
{\sc Faigle, U., Kern, W., Fekete, S.~P., and Hochst{\"a}ttler, W.}
\newblock On the complexity of testing membership in the core of min-cost
  spanning tree games.
\newblock {\em International Journal of Game Theory 26}, 3 (1997), 361--366.

\bibitem{Faigle2000}
{\sc Faigle, U., Kern, W., and Paulusma, D.}
\newblock Note on the computational complexity of least core concepts for
  min-cost spanning tree games.
\newblock {\em Mathematical Methods of Operations Research 52\/} (2000),
  23--38.

\bibitem{Gillies1959}
{\sc Gillies, D.~B.}
\newblock Solutions to general non-zero-sum games.
\newblock In {\em Contributions to the Theory of Games, Volume {IV}}, A.~W.
  Tucker and R.~D. Luce, Eds., vol.~40 of {\em Annals of Mathematics Studies}.
  Princeton University Press, 1959, pp.~47--85.

\bibitem{Gomez2011}
{\sc G{\'o}mez-R{\'u}a, M., and Vidal-Puga, J.}
\newblock Merge-proofness in minimum cost spanning tree problems.
\newblock {\em International Journal of Game Theory 40}, 2 (2011), 309--329.

\bibitem{Granot1981}
{\sc Granot, D., and Huberman, G.}
\newblock Minimum cost spanning tree games.
\newblock {\em Mathematical Programming 21}, 1 (1981), 1--18.

\bibitem{GroetschelLS81}
{\sc Gr{\"o}tschel, M., Lov{\'a}sz, L., and Schrijver, A.}
\newblock The ellipsoid method and its consequences in combinatorial
  optimization.
\newblock {\em Combinatorica 1}, 2 (1981), 169--197.

\bibitem{Jain2007}
{\sc Jain, K., and Mahdian, M.}
\newblock Cost sharing.
\newblock In {\em Algorithmic game theory}, N.~Nisan, T.~Roughgarden,
  {\'E}.~Tardos, and V.~V. Vazirani, Eds. Cambridge University Press New York,
  2007, ch.~15, pp.~385--410.

\bibitem{KarpP80}
{\sc Karp, R.~M., and Papadimitriou, C.~H.}
\newblock On linear characterizations of combinatorial optimization problems.
\newblock In {\em Foundations of Computer Science, 1980., 21st Annual Symposium
  on\/} (1980), IEEE, pp.~1--9.

\bibitem{KentSkorin-Kapov1996}
{\sc Kent, K., and Skorin-Kapov, D.}
\newblock Population monotonic cost allocation on {MST}'s.
\newblock In {\em Proceedings 6th International Conference on Operational
  Research\/} (1996), Croatian Oper. Res. Soc., pp.~43--48.

\bibitem{KernP2003}
{\sc Kern, W., and Paulusma, D.}
\newblock Matching games: The least core and the nucleolus.
\newblock {\em Mathematics of Operations Research 28}, 2 (2003), 294--308.

\bibitem{KohSanita2020}
{\sc Koh, Z.~K., and Sanit\`a, L.}
\newblock An efficient characterization of submodular spanning tree games.
\newblock {\em Mathematical Programming 183\/} (2020), 359--377.

\bibitem{KoeneEtAl2008}
{\sc K\"onemenn, J., Leonardi, S., Sch\"afer, G., and van Zwam, S. H.~M.}
\newblock A group-strategyproof cost sharing mechanism for the steiner forest
  game.
\newblock {\em SIAM Journal on Computing 37}, 5 (2008), 1319--1341.

\bibitem{Liu2016}
{\sc Liu, L., Qi, X., and Xu, Z.}
\newblock Computing near-optimal stable cost allocations for cooperative games
  by {L}agrangian relaxation.
\newblock {\em INFORMS Journal on Computing 28}, 4 (2016), 687--702.

\bibitem{Liu2018}
{\sc Liu, L., Qi, X., and Xu, Z.}
\newblock Simultaneous penalization and subsidization for stabilizing grand
  cooperation.
\newblock {\em Operations Research 66}, 5 (2018), 1362--1375.

\bibitem{Liu2022}
{\sc Liu, L., Zhou, Y., and Li, Z.}
\newblock Lagrangian heuristic for simultaneous subsidization and penalization:
  implementations on rooted travelling salesman games.
\newblock {\em Mathematical Methods of Operations Research 95}, 1 (2022),
  81--99.

\bibitem{Liu2009}
{\sc Liu, Z.}
\newblock Complexity of core allocation for the bin packing game.
\newblock {\em Operations Research Letters 37}, 4 (2009), 225--229.

\bibitem{Maschler1979}
{\sc Maschler, M., Peleg, B., and Shapley, L.~S.}
\newblock Geometric properties of the kernel, nucleolus, and related solution
  concepts.
\newblock {\em Mathematics of Operations Research 4}, 4 (1979), 303--338.

\bibitem{Meir2010}
{\sc Meir, R., Bachrach, Y., and Rosenschein, J.~S.}
\newblock Minimal subsidies in expense sharing games.
\newblock In {\em International Symposium on Algorithmic Game Theory\/} (2010),
  Springer, pp.~347--358.

\bibitem{Meir2011}
{\sc Meir, R., Rosenschein, J.~S., and Malizia, E.}
\newblock Subsidies, stability, and restricted cooperation in coalitional
  games.
\newblock In {\em Twenty-Second International Joint Conference on Artificial
  Intelligence\/} (2011), pp.~301--306.

\bibitem{Meir2013}
{\sc Meir, R., Zick, Y., Elkind, E., and Rosenschein, J.}
\newblock Bounding the cost of stability in games over interaction networks.
\newblock In {\em Proceedings of the AAAI Conference on Artificial
  Intelligence\/} (2013), pp.~690--696.

\bibitem{Moulin}
{\sc Moulin, H., and Shenker, S.}
\newblock Strategyproof sharing of submodular costs: Budget balance versus
  efficiency.
\newblock {\em Economic Theory 18\/} (2001), 511--533.

\bibitem{Myerson1977}
{\sc Myerson, R.~B.}
\newblock Graphs and cooperation in games.
\newblock {\em Mathematics of Operations Research 2}, 3 (1977), 225--229.

\bibitem{TijsEtAL2004}
{\sc Norde, H., Moretti, S., and Tijs, S.}
\newblock Minimum cost spanning tree games and population monotonic allocation
  schemes.
\newblock {\em European Journal on Operational Research 154}, 1 (2004), 84--97.

\bibitem{PadbergR81}
{\sc Padberg, M.~W., and Rao, M.~R.}
\newblock {\em The Russian method for linear inequalities III: Bounded integer
  programming}.
\newblock PhD thesis, INRIA, 1981.

\bibitem{Prim57}
{\sc Prim, R.~C.}
\newblock Shortest connection networks and some generalizations.
\newblock {\em The Bell System Technical Journal 36}, 6 (1957), 1389--1401.

\bibitem{Resnick2009}
{\sc Resnick, E., Bachrach, Y., Meir, R., and Rosenschein, J.~S.}
\newblock The cost of stability in network flow games.
\newblock In {\em International Symposium on Mathematical Foundations of
  Computer Science\/} (2009), Springer, pp.~636--650.

\bibitem{SchulzU2010}
{\sc Schulz, A.~S., and Uhan, N.~A.}
\newblock Sharing supermodular costs.
\newblock {\em Operations Research 58}, 4 (2010), 1051--1056.

\bibitem{Shapley1966}
{\sc Shapley, L.~S., and Shubik, M.}
\newblock Quasi-cores in a monetary economy with nonconvex preferences.
\newblock {\em Econometrica: Journal of the Econometric Society\/} (1966),
  805--827.

\bibitem{Shapley1969}
{\sc Shapley, L.~S., and Shubik, M.}
\newblock On market games.
\newblock {\em Journal of Economic Theory 1}, 1 (1969), 9--25.

\bibitem{Sprumont}
{\sc Sprumont, Y.}
\newblock Population monotonic allocation schemes for cooperative games with
  transferable utility.
\newblock {\em Games and Economic Behaviour 2\/} (1990), 378--394.

\bibitem{Zick2013}
{\sc Zick, Y., Polukarov, M., and Jennings, N.~R.}
\newblock Taxation and stability in cooperative games.
\newblock In {\em Proceedings of the 2013 international conference on
  Autonomous agents and multi-agent systems\/} (2013), pp.~523--530.

\end{thebibliography}

%%%%%%%%%%%%%%%%%%%%%%%%%%%%%%%%%%%%%%%%%%%%%%%%%%%%%%%%%%%%%%%%%%%%%%%%

\end{document}